\begin{document}
\newcommand{\fr}[2]{\frac{\;#1\;}{\;#2\;}}
\newtheorem{theorem}{Theorem}[section]
\newtheorem{lemma}{Lemma}[section]
\newtheorem{proposition}{Proposition}[section]
\newtheorem{corollary}{Corollary}[section]
\newtheorem{conjecture}{Conjecture}[section]
\newtheorem{remark}{Remark}[section]
\newtheorem{definition}{Definition}[section]
\newtheorem{example}{Example}[section]
\newtheorem{notation}{Notation}[section]
\numberwithin{equation}{section}
\newcommand{\Aut}{\mathrm{Aut}\,}
\newcommand{\CSupp}{\mathrm{CSupp}\,}
\newcommand{\Supp}{\mathrm{Supp}\,}
\newcommand{\rank}{\mathrm{rank}\,}
\newcommand{\col}{\mathrm{col}\,}
\newcommand{\len}{\mathrm{len}\,}
\newcommand{\leftlen}{\mathrm{leftlen}\,}
\newcommand{\rightlen}{\mathrm{rightlen}\,}
\newcommand{\length}{\mathrm{length}\,}
\newcommand{\bin}{\mathrm{bin}\,}
\newcommand{\wt}{\mathrm{wt}\,}
\newcommand{\Wt}{\mathrm{Wt}\,}
\newcommand{\diff}{\mathrm{diff}\,}
\newcommand{\lcm}{\mathrm{lcm}\,}
\newcommand{\GL}{\mathrm{GL}\,}
\newcommand{\SJ}{\mathrm{SJ}\,}
\newcommand{\LG}{\mathrm{LG}\,}
\newcommand{\bij}{\mathrm{bij}\,}
\newcommand{\dom}{\mathrm{dom}\,}
\newcommand{\fun}{\mathrm{fun}\,}
\newcommand{\SUPP}{\mathrm{SUPP}\,}
\newcommand{\supp}{\mathrm{supp}\,}
\newcommand{\End}{\mathrm{End}\,}
\newcommand{\Hom}{\mathrm{Hom}\,}
\newcommand{\ran}{\mathrm{ran}\,}
\newcommand{\row}{\mathrm{row}\,}
\newcommand{\Mat}{\mathrm{Mat}\,}
\newcommand{\rk}{\mathrm{rk}\,}
\newcommand{\rs}{\mathrm{rs}\,}
\newcommand{\piv}{\mathrm{piv}\,}
\newcommand{\Tr}{\mathrm{Tr}\,}
\newcommand{\perm}{\mathrm{perm}\,}
\newcommand{\rsupp}{\mathrm{rsupp}\,}
\newcommand{\inv}{\mathrm{inv}\,}
\newcommand{\orb}{\mathrm{orb}\,}
\newcommand{\id}{\mathrm{id}\,}
\newcommand{\soc}{\mathrm{soc}\,}
\newcommand{\unit}{\mathrm{unit}\,}
\newcommand{\word}{\mathrm{word}\,}
\newcommand{\Sym}{\mathrm{Sym}\,}
\newcommand{\Jac}{\mathrm{Jac}\,}

\renewcommand{\thefootnote}{\fnsymbol{footnote}}

\title{Reflexive Partitions Induced by Rank Support and Non-Reflexive Partitions Induced by Rank Weight}
\author{Yang Xu$^1$ \,\,\,\,\,\, Haibin Kan$^2$\,\,\,\,\,\,Guangyue Han$^3$}

\maketitle

\renewcommand{\thefootnote}{\fnsymbol{footnote}}

\footnotetext{\hspace*{-6mm} \begin{tabular}{@{}r@{}p{16cm}@{}}
$^1$ & Shanghai Key Laboratory of Intelligent Information Processing, School of Computer Science, Fudan University,
Shanghai 200433, China.\\
&Shanghai Engineering Research Center of Blockchain, Shanghai 200433, China. {E-mail:xuyyang@fudan.edu.cn}\\
$^2$ & Shanghai Key Laboratory of Intelligent Information Processing, School of Computer Science, Fudan University,
Shanghai 200433, China.\\
&Shanghai Engineering Research Center of Blockchain, Shanghai 200433, China.\\
&Yiwu Research Institute of Fudan University, Yiwu City, Zhejiang 322000, China. {E-mail:hbkan@fudan.edu.cn} \\
$^3$ & Department of Mathematics, Faculty of Science, The University of Hong Kong, Pokfulam Road, Hong Kong, China. {E-mail:ghan@hku.hk} \\
\end{tabular}}

\vskip 3mm

{\hspace*{-6mm}{\bf Abstract}---In this paper, we study partitions of finite modules induced by rank support and rank weight. First, we show that partitions induced by rank support are mutually dual with respect to suitable non-degenerate pairings, and hence are reflexive; moreover, we compute the associated generalized Krawtchouk matrices. Similar results are established for partitions induced by isomorphic relation of rank support. These results generalize counterpart results established for row space partitions and rank partitions of matrix spaces over finite fields. Next, we show that partitions of free modules over a finite chain ring $R$ induced by rank weight are non-reflexive provided that $R$ is not a field; moreover, we characterize the dual partitions explicitly. As a corollary, we show that rank partitions of matrix spaces over $R$ are reflexive if and only if $R$ is a field; moreover, two matrices belong to the same member of the dual partition if and only if their transposes are equivalent. In particular, we show that opposite to matrices over finite fields, rank metric does not induce an association scheme provided that $R$ is not a field, which further settles an open question proposed by Blanco-Chac\'{o}n, Boix, Greferath and Hieta-Aho in \cite{2}.
}\!

\section{Introduction}
A rank metric code is a set of matrices over a finite field endowed with rank metric, i.e., the distance between two matrices is the rank of their difference (see \cite{10,11}). Rank metric codes were first proposed and studied by Delsarte in \cite{6} via an association scheme approach. One of the most important results established in \cite{6} is the MacWilliams identity, where it was shown that rank metric induces an association scheme, which further allows the derivation of MacWilliams identities which relate the rank weight distributions of an additive code with those of its dual code.

Several other approaches towards MacWilliams identities for rank metric codes have been proposed in recent years. In \cite{17}, Ravagnani established an equivalent version of the counterpart identities in \cite{6} by using elementary linear algebra. Later in \cite{11}, Gorla and Ravagnani re-established both the identities in \cite{6,16} via an elementary approach. In \cite{8}, Gluesing-Luerssen recovered the identities in \cite{6} from general MacWilliams identities for partitions of finite abelian groups by observing that the partition induced by rank weight is an orbit partition, which forms a special instance of reflexive partitions. In \cite{18}, Ravagnani recovered the identities in \cite{6} from general MacWilliams identities for regular supports defined from finite abelian groups to regular lattices.

More recently, rank metric codes defined over finite principal ideal rings were proposed and studied by Kamche and Mouaha in \cite{14}, and by Chac\'{o}n, Boix, Greferath and Hieta-Aho in \cite{2}. In particular, in \cite{2}, the authors extend Ravagnani's MacWilliams duality theory in \cite{17} to rank metric codes over finite chain rings by establishing MacWilliams identities relating $q$-binomial moments of a free rank metric code with those of its dual code. Moreover, an interesting open question was proposed in \cite{2}, namely, whether rank metric defined for matrices over finite chain rings still forms an association scheme (just like rank metric defined for matrices over finite fields), and if so, whether it is possible to compute its parameters.

In this paper, we settle the above mentioned open question by showing that somehow surprisingly, opposite to rank metric defined over finite fields, rank metric defined over finite chain rings does not form an association scheme as long as the ring is not a field, and the matrices have at least $2$ rows and $2$ columns. Our approach is based on the theory of partitions and dual partitions proposed in \cite{8,9}. More specifically, we will study three classes of partitions induced by rank support and rank weight, and give explicit characterizations of their dual partitions; moreover, the non-reflexivity of the rank partition will leads us to the desired result.

The rest of this paper is organized as follows.

In Section 2, we give some preliminaries including partitions of finite abelian groups and MacWilliams identities (Section 2.1), basic properties of bilinear maps defined for modules (Section 2.2) and basic properties of modules over finite chain rings (Section 2.3). In Section 3, we study partitions of modules induced by rank support and induced by isomorphic relation of rank support, which are generalizations of row space partitions and rank partitions of matrix spaces over finite fields (see \cite{9}). We first compute the associated generalized Krawtchouk matrices by using the M\"{o}bius function of the submodule lattice (Proposition 3.1), and then show that both these partitions are mutually dual with respect to suitable non-degenerate pairings, and hence are reflexive (Theorems 3.1 and 3.2). Our results can be readily applied to modules over finite Frobenius rings, which further recover counterpart results established for matrix spaces over finite fields in \cite{9}.

In Section 4, we study partitions of free modules over finite chain rings induced by rank weight, where the rank of the module is at least $2$. We show that the dual partitions of such partitions are exactly the partitions induced by isomorphic relation of rank support, and consequently, partitions induced by rank weight are non-reflexive provided that the chain ring is not a field (Theorem 4.1). More explicit necessary and sufficient conditions for two elements to belong to the same member of the dual partition are also provided (Theorem 4.2).

In Section 5, we study the rank partitions of matrix spaces over finite chain rings, where the matrices have at least $2$ rows and $2$ columns. With the help of results established in Section 4, we characterize the dual partition of the rank partition by showing that among others, two matrices belong to the same member of the dual partition if and only if their transposes are equivalent; moreover, the rank partition is reflexive if and only if the chain ring is a field (Theorem 5.1). Finally, using the connection between reflexive partitions and association schemes, we show that rank metric induces an association scheme if and only if the chain ring is a field (Theorem 5.2), which further settles the aforementioned question proposed in \cite{2}.

\section{Preliminaries}
We begin with a few notations and terminologies that will be used throughout the rest of the paper. First of all, we let $\mathbb{C}^{\ast}=\mathbb{C}-\{0\}$ denote the multiplicative group of the complex numbers. For any $a,b\in\mathbb{Z}$, we use $[a,b]$ to denote the set of all the integers
between $a$ and $b$, i.e., $[a,b]=\{i\in\mathbb{Z}\mid a\leqslant i\leqslant b\}$.

Next, consider a finite set $E$. A \textit{partition} of $E$ is a collection of nonempty and disjoint subsets of $E$ whose union is $E$. Consider a partition $\Gamma$ of $E$. For any $u,v\in E$, we write $u\sim_{\Gamma}v$ if $u$ and $v$ belong to the same member of $E$. For any $D\subseteq E$, the \textit{$\Gamma$-distribution} of $D$ is defined as the sequence $(|D\cap B|\mid B\in\Gamma)$. For two partitions $\Gamma,\Psi$ of $E$, we say that \textit{$\Gamma$ is finer than $\Psi$}, if for any $u,v\in E$, $u\sim_{\Gamma}v$ implies that $u\sim_{\Psi}v$.

Now consider a ring $R$, which is always assumed to be associative with a multiplicative identity. For any $m,n\in\mathbb{Z}^{+}$, let $R^{[m]}$ denote the set of all the column vectors over $R$ of length $m$, and let $\Mat_{m,n}(R)$ denote the set of all the matrices over $R$ with $m$ rows and $n$ columns. The \textit{standard inner product} on $R^{[m]}$ is the map from $R^{[m]}\times R^{[m]}$ to $R$ defined as
$$(a,b)\mapsto\sum_{i=1}^{m}a_ib_i,$$
where $a_i$ denotes the $i$-th entry of $a\in R^{[m]}$; moreover, following \cite{11}, the \textit{trace inner product} on $\Mat_{m,n}(R)$ is the map from $\Mat_{m,n}(R)\times \Mat_{m,n}(R)$ to $R$ defined as
$$(\alpha,\beta)\mapsto \Tr(\alpha^{T}\beta),$$
where $\alpha^{T}$ denotes the transpose of $\alpha\in\Mat_{m,n}(R)$, and $\Tr$ denotes the trace of matrices. For a finite left (or right) $R$-module $P$, let $\rank_R(P)$ denote the minimal cardinality of a generator set of $P$.

\subsection{Partitions of finite abelian groups and MacWilliams identities}
Let $G$ and $H$ be finite abelian groups, and let $f:G\times H\longrightarrow \mathbb{C}^{*}$ be a \textit{pairing}, i.e., for any $a,c\in G$ and $b,d\in H$, it holds that $f(ac,b)=f(a,b)f(c,b)$, $f(a,bd)=f(a,b)f(a,d)$ (see [16, Definition 11.7]). For any $C\subseteq G$ and $D\subseteq H$, define $C^{\ddagger}\leqslant H$ and $^{\ddagger}D\leqslant G$ as
\begin{equation}C^{\ddagger}\triangleq\{b\in H\mid \forall~a\in C:f(a,b)=1\},\end{equation}
\begin{equation}{^{\ddagger}D}\triangleq\{a\in G\mid \forall~b\in D:f(a,b)=1\}.\end{equation}
We further assume that $f$ is \textit{non-degenerate}, i.e., $G^{\ddagger}=\{1_{H}\}$, ${^{\ddagger}H}=\{1_{G}\}$ (see [16, Definition 11.7]). Note that the non-degenerate condition implies that $G\cong H$ as groups, and conversely, $G\cong H$ implies the existence of such a non-degenerate pairing (see [16, Lemma 11.8]).

For a partition $\Gamma$ of $H$, \textit{the left dual partition of $\Gamma$ with respect to $f$}, denoted by $\textbf{\textit{l}}(\Gamma)$, is the partition of $G$ such that for any $a,c\in G$, $a\sim_{\textbf{\textit{l}}(\Gamma)}c$ if and only if
\begin{equation}\forall~B\in\Gamma:\sum_{b\in B}f(a,b)=\sum_{b\in B}f(c,b).\end{equation}
Similarly, for a partition $\Lambda$ of $G$, \textit{the right dual partition of $\Lambda$ with respect to $f$}, denoted by $\textbf{\textit{r}}(\Lambda)$, is the partition of $H$ such that for any $b,d\in H$, $b\sim_{\textbf{\textit{r}}(\Lambda)}d$ if and only if
\begin{equation}\forall~A\in\Lambda:\sum_{a\in A}f(a,b)=\sum_{a\in A}f(a,d).\end{equation}
The notion of partition and dual partition provides a general framework for addressing MacWilliams identities. More precisely, let $\Gamma$ be a partition of $H$, and let $\Lambda$ be a partition of $G$. When $\Lambda$ is finer than $\textbf{\textit{l}}(\Gamma)$, the \textit{left generalized Krawtchouk matrix of $(\Lambda,\Gamma)$ with respect to $f$} is defined as $\rho:\Lambda\times\Gamma\longrightarrow \mathbb{C}$, where for any $(A,B)\in\Lambda\times\Gamma$,
\begin{equation}\text{$\rho(A,B)=\sum_{b\in B}f(a,b)$ for any chosen $a\in A$.}\end{equation}
In this case, it has been proven in [8, Theorem 2.7] that for an additive code (i.e., subgroup) $C\leqslant G$, the $\Lambda$-distribution of $C$ determines the $\Gamma$-distribution of $C^{\ddagger}$ via the following MacWilliams identity
\begin{equation}\forall~B\in\Gamma:|C||{C^{\ddagger}}\cap B|=\sum_{A\in \Lambda}|C\cap A|\cdot\rho(A,B).\end{equation}
Similarly, when $\Gamma$ is finer than $\textbf{\textit{r}}(\Lambda)$, the \textit{right generalized Krawtchouk matrix of $(\Lambda,\Gamma)$ with respect to $f$} is defined as $\varepsilon:\Lambda\times\Gamma\longrightarrow \mathbb{C}$, where for any $(A,B)\in\Lambda\times\Gamma$, \begin{equation}\text{$\varepsilon(A,B)=\sum_{c\in A}f(c,d)$ for any chosen $d\in B$.}\end{equation}
In this case, for an additive code $D\leqslant H$, the $\Gamma$-distribution of $D$ determines the $\Lambda$-distribution of ${^{\ddagger}D}$ via the following MacWilliams identity
\begin{equation}\forall~A\in\Lambda:|D||{^{\ddagger}D}\cap A|=\sum_{B\in\Gamma}|D\cap B|\cdot\varepsilon(A,B).\end{equation}

The following definition follows [8, Definition 2.1] and [22, Definition 2].

\setlength{\parindent}{0em}
\begin{definition}
{Let $\Gamma$ be a partition of $H$, and let $\Lambda$ be a partition of $G$. If $\Lambda$ is finer than $\textbf{\textit{l}}(\Gamma)$ and $\Gamma$ is finer than $\textbf{\textit{r}}(\Lambda)$, then we say that $(\Lambda,\Gamma)$ is mutually dual with respect to $f$. Moreover, $\Gamma$ is said to be reflexive if $\Gamma=\textbf{\textit{r}}(\textbf{\textit{l}}(\Gamma))$.
}
\end{definition}

\setlength{\parindent}{2em}
Now we collect some important properties of reflexive partitions. The following lemma is a consequence of [8, Theorem 2.4].

\setlength{\parindent}{0em}
\begin{lemma}
{\bf{(1)}}\,\,Let $\Gamma$ be a partition of $H$. Then, we have $\{1_{G}\}\in\textbf{\textit{l}}(\Gamma)$, $|\Gamma|\leqslant|\textbf{\textit{l}}(\Gamma)|$, $\textbf{\textit{r}}(\textbf{\textit{l}}(\Gamma))$ is finer than $\Gamma$. Moreover, $\Gamma$ is reflexive if and only if  $|\Gamma|=|\textbf{\textit{l}}(\Gamma)|$.

{\bf{(2)}}\,\,Let $\Gamma$ be a partition of $H$, and let $\Lambda$ be a partition of $G$ such that $(\Lambda,\Gamma)$ is mutually dual with respect to $f$. Then, we have $|\Lambda|=|\Gamma|$, $\Lambda=\textbf{\textit{l}}(\Gamma)$, $\Gamma=\textbf{\textit{r}}(\Lambda)$, and both $\Gamma$ and $\Lambda$ are reflexive.
\end{lemma}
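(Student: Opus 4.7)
The plan is to deduce both parts from simple character-theoretic bookkeeping enabled by the non-degenerate pairing $f$. Throughout the argument I use the characters $\chi_{b} : G \to \mathbb{C}^{\ast}$ and $\xi_{a} : H \to \mathbb{C}^{\ast}$ defined by $\chi_{b}(a) = \xi_{a}(b) = f(a,b)$ and their block sums $\chi_{B} = \sum_{b \in B} \chi_{b}$, $\xi_{A} = \sum_{a \in A} \xi_{a}$ for subsets $B \subseteq H$, $A \subseteq G$. Non-degeneracy of $f$ implies $|G| = |H|$ and makes the $\chi_{b}$ pairwise distinct, so $\{\chi_{b}\}_{b \in H}$ is a basis of $\mathbb{C}^{G}$; symmetrically $\{\xi_{a}\}_{a \in G}$ is a basis of $\mathbb{C}^{H}$.

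For part (1), first I verify $\{1_{G}\} \in \textbf{\textit{l}}(\Gamma)$: if $a \sim_{\textbf{\textit{l}}(\Gamma)} 1_{G}$, summing the defining equalities over the blocks of $\Gamma$ yields $\sum_{b \in H} f(a,b) = |H|$, and since $|f(a,b)| = 1$ the triangle inequality forces $f(a,b) = 1$ for every $b$, so $a \in {^{\ddagger}H} = \{1_{G}\}$. Next, $|\Gamma| \leqslant |\textbf{\textit{l}}(\Gamma)|$ follows because the $\chi_{B}$, $B \in \Gamma$, are sums over disjoint pieces of a basis, hence linearly independent, and are constant on the blocks of $\textbf{\textit{l}}(\Gamma)$, so they lie in a space of dimension $|\textbf{\textit{l}}(\Gamma)|$. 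The key step, showing that $\textbf{\textit{r}}(\textbf{\textit{l}}(\Gamma))$ is finer than $\Gamma$, I handle by Fourier inversion on $H$ applied to the indicator function $\mathbf{1}_{B}$ of a block $B \in \Gamma$: this gives $\mathbf{1}_{B} = \tfrac{1}{|H|}\sum_{a \in G}\overline{\chi_{B}(a)}\,\xi_{a}$, and since $\chi_{B}$ is constant on every $A \in \textbf{\textit{l}}(\Gamma)$ the sum regroups as a $\mathbb{C}$-linear combination of the $\xi_{A}$, $A \in \textbf{\textit{l}}(\Gamma)$. Each $\xi_{A}$ being constant on the blocks of $\textbf{\textit{r}}(\textbf{\textit{l}}(\Gamma))$ by the very definition of $\textbf{\textit{r}}$, so is $\mathbf{1}_{B}$; thus $B$ is a union of $\textbf{\textit{r}}(\textbf{\textit{l}}(\Gamma))$-blocks, which is exactly the refinement claim.

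The reflexivity equivalence then falls out of a dimension count. The forward implication is immediate from the chain $|\Gamma| \leqslant |\textbf{\textit{l}}(\Gamma)| \leqslant |\textbf{\textit{r}}(\textbf{\textit{l}}(\Gamma))|$, the second inequality being the first one applied to the partition $\textbf{\textit{l}}(\Gamma)$ of $G$ with the roles of $G$ and $H$ swapped. For the converse, if $|\Gamma| = |\textbf{\textit{l}}(\Gamma)|$, the above Fourier identity places the $|\Gamma|$ linearly independent functions $\mathbf{1}_{B}$ inside the $|\textbf{\textit{l}}(\Gamma)|$-dimensional linear span of the $\xi_{A}$; equality of dimensions makes the two spans coincide, so each $\xi_{A}$ is constant on the blocks of $\Gamma$, giving $\Gamma$ finer than $\textbf{\textit{r}}(\textbf{\textit{l}}(\Gamma))$, and combined with the opposite refinement this yields $\Gamma = \textbf{\textit{r}}(\textbf{\textit{l}}(\Gamma))$.

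Part (2) is then a routine consequence of part (1) together with the manifest symmetry between $\textbf{\textit{l}}$ and $\textbf{\textit{r}}$ under interchange of $G$ and $H$. Mutual duality supplies $\Lambda$ finer than $\textbf{\textit{l}}(\Gamma)$ and $\Gamma$ finer than $\textbf{\textit{r}}(\Lambda)$, whence $|\Lambda| \geqslant |\textbf{\textit{l}}(\Gamma)|$ and $|\Gamma| \geqslant |\textbf{\textit{r}}(\Lambda)|$; combined with the inequalities $|\Gamma| \leqslant |\textbf{\textit{l}}(\Gamma)|$ and $|\Lambda| \leqslant |\textbf{\textit{r}}(\Lambda)|$ of part (1), this forces $|\Lambda| = |\textbf{\textit{l}}(\Gamma)| = |\Gamma| = |\textbf{\textit{r}}(\Lambda)|$. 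Refinement together with equal cardinality then gives $\Lambda = \textbf{\textit{l}}(\Gamma)$ and $\Gamma = \textbf{\textit{r}}(\Lambda)$, while the equality $|\Gamma| = |\textbf{\textit{l}}(\Gamma)|$ combined with part (1) yields reflexivity of $\Gamma$ (and of $\Lambda$ by the symmetric argument). I expect the refinement step in part (1), namely the Fourier inversion argument that $\textbf{\textit{r}}(\textbf{\textit{l}}(\Gamma))$ is finer than $\Gamma$, to be the main obstacle, since it is the one place where the definitions must be reversed; once that identity is in hand, the remainder is linear-algebra bookkeeping.
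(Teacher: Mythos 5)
Your proof is correct. The paper itself gives no argument for this lemma --- it simply cites [8, Theorem~2.4] --- and your Fourier-inversion derivation is a sound, self-contained reconstruction along the same standard lines as that source: non-degeneracy makes $\{\chi_b\}_{b\in H}$ the full character basis of $\mathbb{C}^{G}$, the expansion $\mathbf{1}_{B}=\tfrac{1}{|H|}\sum_{A\in\textbf{\textit{l}}(\Gamma)}\overline{\rho(A,B)}\,\xi_{A}$ gives the refinement $\textbf{\textit{r}}(\textbf{\textit{l}}(\Gamma))\preceq\Gamma$, and the dimension counts close both parts. All steps check out, including the triangle-inequality argument for $\{1_{G}\}\in\textbf{\textit{l}}(\Gamma)$ and the cardinality chain in part (2).
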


\setlength{\parindent}{2em}
Mutually dual reflexive partitions form an appealing case for MacWilliams identities. Indeed, MacWilliams identities derived from a reflexive partition is invertible, whose inverse is essentially the  MacWilliams identities derived from the dual partition. More precisely, following (2) of Lemma 2.1, and let $\rho:\Lambda\times\Gamma\longrightarrow \mathbb{C}$ and $\varepsilon:\Lambda\times\Gamma\longrightarrow \mathbb{C}$ be the left and right generalized Krawtchouk matrices of $(\Lambda,\Gamma)$ with respect to $f$, respectively. Then, for any $(U,V)\in\Gamma\times\Gamma$, it holds that
\begin{eqnarray*}
\begin{split}
\sum_{A\in\Lambda}\varepsilon(A,U)\cdot\overline{\rho(A,V)}=\begin{cases}
|G|,&U=V;\\
0,&U\neq V,
\end{cases}
\end{split}
\end{eqnarray*}
and for any $(I,J)\in\Lambda\times\Lambda$, it holds that
\begin{eqnarray*}
\begin{split}
\sum_{B\in\Gamma}\overline{\rho(I,B)}\cdot\varepsilon(J,B)=\begin{cases}
|G|,&I=J;\\
0,&I\neq J.
\end{cases}
\end{split}
\end{eqnarray*}

\setlength{\parindent}{2em}
In the following example, we consider reflexive partitions induced by Hamming support and Hamming weight.

\setlength{\parindent}{2em}
\begin{example}
Let $B$ be a finite abelian group with $|B|=q$, and let $\Omega$ be a non-empty finite set with $|\Omega|=n$. For any $\alpha\in B^{\Omega}$, the Hamming support of $\alpha$ is defined as $\supp(\alpha)\triangleq\{i\in\Omega\mid \alpha_i\neq1_B\}$, and the Hamming weight of $\alpha$ is defined as $\wt(\alpha)\triangleq|\supp(\alpha)|$. Let $\Delta_1$ and $\Delta_2$ denote the partitions of $B^{\Omega}$ induced by Hamming support and Hamming weight, respectively. Moreover, let $\varpi:B\times B\longrightarrow \mathbb{C}^{*}$ be a non-degenerate pairing, and define the non-degenerate pairing $f:B^{\Omega}\times B^{\Omega}\longrightarrow \mathbb{C}^{*}$ as
$$f(\alpha,\gamma)=\prod_{i\in\Omega}\varpi(\alpha_i,\gamma_i).$$
Consider $\alpha\in B^{\Omega}$ with $D=\supp(\alpha)$ and $s=\wt(\alpha)$. Then, for any $I\subseteq\Omega$, we have
$$\sum_{(\beta\in B^{\Omega},\supp(\beta)=I)}f(\alpha,\beta)=(-1)^{|I\cap D|}(q-1)^{|I-D|},$$
and for any $m\in[0,n]$, we have
$$\sum_{(\beta\in B^{\Omega},\wt(\beta)=m)}f(\alpha,\beta)=\sum_{t=0}^{m}(-1)^{t}(q-1)^{m-t}\binom{s}{t}\binom{n-s}{m-t}.$$
Consequently, both $(\Delta_1,\Delta_1)$ and $(\Delta_2,\Delta_2)$ are mutually dual with respect to $f$, and hence both $\Delta_1$ and $\Delta_2$ are reflexive (see [8, Example 2.3 (c)] and [18, Example 36]).
\end{example}

\setlength{\parindent}{2em}
We end this subsection by noting that reflexivity of partitions is independent of the choice of the non-degenerate pairing. In fact, reflexive partitions can be characterized in terms of association schemes (see \cite{5,7,22}), as detailed in the following remark.

\setlength{\parindent}{0em}
\begin{remark}
Let $\Gamma$ be a partition of $H$. As a corollary of [22, Theorem 1] (also see the
discussion after [7, Theorem 2.4]), $\Gamma$ is reflexive if and only if $\{1_H\}\in\Gamma$, and for any $U,V,W\in\Gamma$ and $w,z\in W$, it holds that
$$|\{(u,v)\in U\times V\mid uv=w\}|=|\{(u,v)\in U\times V\mid uv=z\}|.$$
Alternatively speaking, let $\Theta$ be the partition of $H^{2}$ defined as $(u,v)\sim_{\Theta}(x,y)\Longleftrightarrow u^{-1}v\sim_{\Gamma}x^{-1}y$. Then, $\Gamma$ is reflexive if and only if $(H,\Theta)$ is an association scheme.
\end{remark}

\subsection{Bilinear maps defined for modules}

\setlength{\parindent}{2em}
Throughout this subsection, let $R$ and $S$ be rings, $P$ be a finite left $R$-module, $Q$ be a finite right $S$-module, $U$ be an $(R,S)$-bimodule, and let $g:P\times Q\longrightarrow U$ be an $(R,S)$-bilinear map, i.e., for any $a,c\in P$, $b,d\in Q$, $r\in R$ and $s\in S$, it holds that
$$g(a+c,b)=g(a,b)+g(c,b),$$
$$g(a,b+d)=g(a,b)+g(a,d),$$
$$g(ra,bs)=rg(a,b)s.$$
For any $A\subseteq P$ and $B\subseteq Q$, define
\begin{equation}A^{\bot}=\{v\in Q\mid\forall~u\in A:g(u,v)=0\},\end{equation}
\begin{equation}{^{\bot}B}=\{u\in P\mid \forall~v\in B:g(u,v)=0\}.\end{equation}
We further assume that $g$ is \textit{non-degenerate}, i.e., $P^{\bot}=\{0\}$, ${^{\bot}Q}=\{0\}$ (see [1, Chapter 6, Section 24]).

In the following lemma, we use $g$ to obtain non-degenerate pairings.

\setlength{\parindent}{0em}
\begin{lemma}
Let $\chi\in\Hom((U,+),\mathbb{C}^{\ast})$ such that the only left $R$-submodule or right $S$-submodule of $U$ contained in $\ker(\chi)$ is $\{0\}$. Define $f:P\times Q\longrightarrow \mathbb{C}^{\ast}$ as $f(a,b)=\chi(g(a,b))$, i.e., $f=\chi\circ g$. Then, $f$ is a non-degenerate pairing. Moreover, for any $A\leqslant_{R} P$ and $B\leqslant_{S} Q$, it holds that
\begin{equation}A^{\bot}=\{v\in Q\mid\forall~u\in A:f(u,v)=1\},\end{equation}
\begin{equation}{^{\bot}B}=\{u\in P\mid \forall~v\in B:f(u,v)=1\}.\end{equation}
\end{lemma}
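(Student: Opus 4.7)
The plan is to verify the three claims in turn, each one leveraging the key hypothesis that the only left $R$-submodule or right $S$-submodule of $U$ contained in $\ker(\chi)$ is $\{0\}$. The underlying pattern is the same throughout: whenever $f$ evaluates to $1$ on a suitable set, the image of that set under $g$ will be a one-sided submodule of $U$ contained in $\ker(\chi)$, hence trivial.

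First, I would verify that $f$ is a pairing by a direct computation. For any $a,c \in P$ and $b \in Q$, bilinearity of $g$ gives $g(a+c,b) = g(a,b) + g(c,b)$, and since $\chi$ is a homomorphism from $(U,+)$ to $\mathbb{C}^{\ast}$, we obtain $f(a+c,b) = f(a,b)f(c,b)$. The analogous identity in the second argument follows symmetrically from bilinearity of $g$.

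Next, for non-degeneracy, suppose $b \in Q$ satisfies $f(a,b) = 1$ for every $a \in P$. Then $g(a,b) \in \ker(\chi)$ for all $a \in P$. The key observation is that $g(P,b) = \{g(a,b) \mid a \in P\}$ is a left $R$-submodule of $U$: it is closed under addition by bilinearity, and closed under left $R$-action since $r \cdot g(a,b) = g(ra,b) \in g(P,b)$. By the hypothesis on $\chi$, this submodule must equal $\{0\}$, so $g(a,b) = 0$ for all $a \in P$, i.e., $b \in P^{\bot} = \{0\}$ by non-degeneracy of $g$. The proof that ${^{\ddagger}Q} = \{1_{P}\}$ (i.e., $0$ in additive notation) is entirely symmetric, this time using that $g(a,Q)$ is a right $S$-submodule of $U$.

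Finally, for the characterizations of $A^{\bot}$ and ${^{\bot}B}$: the forward inclusions are trivial since $g(u,v) = 0$ implies $f(u,v) = \chi(0) = 1$. For the reverse, given $A \leqslant_{R} P$, suppose $v \in Q$ satisfies $f(u,v) = 1$ for all $u \in A$. Then $g(A,v) \subseteq \ker(\chi)$, and since $A$ is a left $R$-submodule, the same reasoning as above shows that $g(A,v)$ is a left $R$-submodule of $U$. The hypothesis on $\chi$ then forces $g(A,v) = \{0\}$, hence $v \in A^{\bot}$. The characterization of ${^{\bot}B}$ for $B \leqslant_{S} Q$ is symmetric, using the right $S$-submodule structure of $g(u,B)$. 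I expect no serious obstacle: the entire lemma is a bookkeeping exercise once one recognizes that bilinearity of $g$ plus the one-sided module structure on the fixed side is exactly what promotes an image of $g$ to a one-sided submodule, at which point the hypothesis on $\chi$ finishes the argument.
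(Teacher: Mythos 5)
Your proposal is correct and takes essentially the same route as the paper: the decisive step in both is that $\{g(u,v)\mid u\in A\}$ is a one-sided submodule of $U$ contained in $\ker(\chi)$, hence $\{0\}$. The only (cosmetic) difference is that the paper obtains non-degeneracy of $f$ as the special case $A=P$, $B=Q$ of the two characterizations, whereas you prove it separately by the same argument and also spell out the routine verification that $f$ is a pairing.
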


\begin{proof}
First, we prove (2.11). Consider $v\in Q$. If $v\in A^{\bot}$, then for any $u\in A$, we have $g(u,v)=0$, which implies that $f(u,v)=\chi(0)=1$, as desired. Now suppose that $f(u,v)=1$ for all $u\in A$. Let $L=\{g(u,v)\mid u\in A\}$. Then, we have $L\subseteq\ker(\chi)$; moreover, it follows from $A\leqslant_{R} P$ and $g$ is $R$-linear that $L$ is a left $R$-submodule of $U$. Therefore we have $L=\{0\}$, which implies that $v\in A^{\bot}$, as desired. Second, (2.12) can be proved in a parallel way. Finally, noticing that $P^{\bot}=\{0\}$, ${^{\bot}Q}=\{0\}$,  (2.11) and (2.12) immediately imply that $f$ is non-degenerate, as desired.
\end{proof}

\setlength{\parindent}{2em}
The assumption for $\chi$ in Lemma 2.2 is closely related to the notion of \textit{finite Frobenius rings} (see \cite{8,13,20}). More precisely,
when $R$ is finite, a character $\chi\in\Hom((R,+),\mathbb{C}^{\ast})$ is referred to as a \textit{generating character of $R$} if the only left ideal of $R$ contained in $\ker(\chi)$ is $\{0\}$ (or equivalently, the only right ideal of $R$ contained in $\ker(\chi)$ is $\{0\}$, see [20, Theorem 4.3]). It is well-known that a finite ring has a generating character if and only if it is a Frobenius ring, i.e., its left socle is a principle left ideal (see [13, Theorems 1 and 2]). Hence if $R=S=U$ are finite, then $\chi$ satisfies the assumption of Lemma 2.2 if and only if $R$ is a Frobenius ring and $\chi$ is a generating character of $R$.

Lemma 2.2 can be used to derive MacWilliams identities. More precisely, following Lemma 2.2 and Section 2.1, let $\Gamma$ be a partition of $Q$, and let $\Lambda$ be a partition of $P$. When $\Lambda$ is finer than $\textbf{\textit{l}}(\Gamma)$, then for any $C\leqslant_{R}P$, the $\Lambda$-distribution of $C$ determines the $\Gamma$-distribution of $C^{\bot}$ via the MacWilliams identity (2.6); similarly, when $\Gamma$ is finer than $\textbf{\textit{r}}(\Lambda)$, then for any $D\leqslant_{S}Q$, the $\Gamma$-distribution of $D$ determines the $\Lambda$-distribution of ${^{\bot}D}$ via the MacWilliams identity (2.8).

\setlength{\parindent}{2em}
Next, we recall the notion of \textit{pseudo-injective modules}. Following [21, Section 5.3], the left $R$-module $P$ is said to be pseudo-injective if for any $A\leqslant_{R} P$ and any injective $h\in\Hom_{R}(A,P)$, there exists $\varphi\in\End_{R}(P)$ with $\varphi\mid_{A}=h$. Since $P$ is finite, by [21, Proposition 5.1], if $P$ is pseudo-injective, then for any $A\leqslant_{R} P$, any injective $h\in\Hom_{R}(A,P)$ extends to a left $R$-module automorphism of $P$. Note that pseudo-injectivity can be defined in a parallel way for right $S$-modules.

The following Lemma is a corollary of [1, Theorem 30.1], and will be used in Sections 3 and 4.

\setlength{\parindent}{0em}
\begin{lemma}
Suppose that for any simple left $R$-module $X$, $\Hom_{R}(X,U)$ is a simple right $S$-module, and for any simple right $S$-modules $Y$, $\Hom_{S}(Y,U)$ is a simple left $R$-module. Let $A\leqslant_{R} P$, $C\leqslant_{R} P$. Then, the following two statements hold true:

{\bf{(1)}}\,\,If $P$ is pseudo-injective and $A\cong C$ as left $R$-modules, then $A^{\bot}\cong C^{\bot}$ as right $S$-modules;

{\bf{(2)}}\,\,If $Q$ is pseudo-injective and $A^{\bot}\cong C^{\bot}$ as right $S$-modules, then $A\cong C$ as left $R$-modules.
\end{lemma}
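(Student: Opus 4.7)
The plan is to translate the annihilators into $\Hom$ spaces via the non-degenerate pairing $g$, and then combine pseudo-injectivity of $P$ or $Q$ with the Morita-type duality supplied by [1, Theorem 30.1].

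First I would introduce the natural bimodule homomorphisms $\theta_P\colon P\to\Hom_S(Q,U)$ and $\theta_Q\colon Q\to\Hom_R(P,U)$ defined by $\theta_P(a)(b)=\theta_Q(b)(a)=g(a,b)$; non-degeneracy of $g$ makes both injective. Under the assumed simple-to-simple behavior of $\Hom_R(-,U)$ and $\Hom_S(-,U)$, the duality result [1, Theorem 30.1] provides two conclusions: (a) the two $\Hom$-functors preserve length on finite modules, so by a cardinality count $\theta_P$ and $\theta_Q$ are in fact isomorphisms; and (b) the double-annihilator identities $A={}^{\bot}(A^{\bot})$ and $B=({}^{\bot}B)^{\bot}$ hold for all $A\leqslant_R P$ and $B\leqslant_S Q$. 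Tracking these identifications one reads off $A^{\bot}\cong\Hom_R(P/A,U)$ as right $S$-modules and ${}^{\bot}B\cong\Hom_S(Q/B,U)$ as left $R$-modules.

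For part (1), I would lift the given $R$-isomorphism $h\colon A\to C$ using pseudo-injectivity of $P$: composing $h$ with the inclusion $C\hookrightarrow P$ yields an injective $R$-map $A\to P$, which extends to some $\varphi\in\End_R(P)$; finiteness of $P$ forces $\varphi\in\Aut_R(P)$, and $\varphi(A)=C$ implies that $\varphi$ descends to an $R$-isomorphism $P/A\cong P/C$. Applying the contravariant functor $\Hom_R(-,U)$ and invoking the identification $A^{\bot}\cong\Hom_R(P/A,U)$ then yields $A^{\bot}\cong C^{\bot}$ as right $S$-modules. Part (2) is proved by the symmetric argument: pseudo-injectivity of $Q$ converts the given $S$-isomorphism $A^{\bot}\cong C^{\bot}$ into an automorphism of $Q$ carrying $A^{\bot}$ onto $C^{\bot}$, hence $Q/A^{\bot}\cong Q/C^{\bot}$; applying $\Hom_S(-,U)$ together with the identifications ${}^{\bot}(A^{\bot})\cong\Hom_S(Q/A^{\bot},U)$ and the double-annihilator equalities $A={}^{\bot}(A^{\bot})$, $C={}^{\bot}(C^{\bot})$ then gives $A\cong C$ as left $R$-modules.

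The main obstacle is the first paragraph, namely matching the simple-$\Hom$ hypotheses stated in the lemma with the precise cogenerator/length-preserving formulation that appears in [1, Theorem 30.1], and confirming that on the finite modules $P$ and $Q$ that theorem really delivers both the isomorphisms $\theta_P,\theta_Q$ and the full double-annihilator property. Once this dictionary between annihilators and $\Hom$-spaces is in place, the remaining steps are routine consequences of pseudo-injectivity together with the contravariance of $\Hom(-,U)$.
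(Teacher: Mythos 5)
Your argument is correct and follows essentially the paper's own route: pseudo-injectivity yields an automorphism of $P$ (resp.\ $Q$) carrying $A$ onto $C$ (resp.\ $A^{\bot}$ onto $C^{\bot}$) --- though here you should invoke [21, Proposition 5.1] as recalled in Section 2.2 rather than claim that finiteness makes an arbitrary extension $\varphi\in\End_{R}(P)$ an automorphism, which is not literally true --- this descends to an isomorphism of quotients, and applying $\Hom(-,U)$ together with the identifications $\Hom_{R}(P/A,U)\cong A^{\bot}$, $\Hom_{S}(Q/B,U)\cong{}^{\bot}B$ and the double-annihilator equalities drawn from [1, Theorem 30.1] gives both statements. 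Your first paragraph (the maps $\theta_{P},\theta_{Q}$ and the length/cardinality count) is merely an unpacking of what the paper takes directly from that citation, so the two proofs coincide in substance.
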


\begin{proof}
We only prove (1) and the proof of (2) is similar. Since $P$ is pseudo-injective and $A\cong C$, we can choose $\varphi\in\Aut_{R}(P)$ with $\varphi[A]=C$. Moreover, $\psi:P/A\longrightarrow P/C$ defined as $\psi(x+A)=\varphi(x)+C$ for all $x\in P$ is a well-defined left $R$-module isomorphism. It follows that $\Hom_{R}(P/A,U)\cong\Hom_{R}(P/C,U)$ as right $S$-modules. Moreover, by [1, Theorem 30.1], we have $\Hom_{R}(P/A,U)\cong A^{\bot}$ and $\Hom_{R}(P/C,U)\cong C^{\bot}$, which further establishes the desired result.
\end{proof}

\subsection{Modules over finite chain rings}
\setlength{\parindent}{2em}
In this subsection, we follow [3, Section 2.1] and [12, Section 2] to collect some basic properties of modules over finite chain rings which we will use frequently in Section 4.

Throughout this subsection, let $R$ be a \textit{finite chain ring}, i.e., $R$ is a finite ring that has a unique maximal left ideal, denoted by $J$; moreover, $J$ is a principal left ideal. We fix $\pi\in R$ such that $J=R\pi$. In fact, $J$ is an ideal of $R$, and the quotient ring $R/J$ is a finite field, whose cardinality is denoted by $q$. There uniquely exists $s\in\mathbb{Z}^{+}$ such that $\pi^{s}=0$ and $\pi^{s-1}\neq0$. Then, for any $i\in[0,s]$, we have $R\pi^{i}=\pi^{i} R$, $|R/R\pi^{i}|=q^{i}$, and $R/R\pi^{i}\cong R\pi^{s-i}$ as left $R$-modules. We note that every left or right ideal of $R$ is equal to $R\pi^{i}$ for some $i\in[0,s]$; consequently, the set of all the ideals of $R$ forms a chain under the inclusion relation. We also note that $R$ is a Frobenius ring. In fact, a character $\chi\in\Hom((R,+),\mathbb{C}^{\ast})$ is a generating character of $R$ if and only if $R\pi^{s-1}\nsubseteq\ker(\chi)$.

Now we collect some facts on finite left $R$-modules in the following lemma, whose proof is given in the appendix. Similar results also hold true for finite right $R$-modules in a parallel fashion.

\setlength{\parindent}{0em}
\begin{lemma}
{\bf{(1)}}\,\,Let $M$ be a finite left $R$-module. Then, there exists $m\in\mathbb{Z}^{+}$ and $(\lambda_1,\dots,\lambda_m)\in[0,s]^{m}$ such that
\begin{equation}M\cong\prod_{i=1}^{m}R/R\pi^{\lambda_i}\cong\prod_{i=1}^{m}R\pi^{s-\lambda_i}\end{equation}
as left modules. Moreover, for any $t\in[0,s]$, it holds that
$$|\{y\in M\mid \pi^{t}y=0\}|=q^{\sum_{i=1}^{m}\min\{\lambda_i,t\}}.$$

{\bf{(2)}}\,\,Let $M$ and $P$ be two finite left $R$-modules such that $|\{y\in M\mid \pi^{t}y=0\}|=|\{z\in P\mid \pi^{t}z=0\}|$ for all $t\in[1,s]$. Then, $M\cong P$ as left $R$-modules.

{\bf{(3)}}\,\,Let $M$ be a free finite left $R$-module with $\rank_R(M)=m\geqslant1$, $A\leqslant_{R}M$, and let $B=\{y\in M \mid \pi y\in A\}$. Then, for any $t\in[1,s]$, it holds that
$$|\{z\in B\mid \pi^{t}z=0\}|=|\{y\in A\mid \pi^{t-1}y=0\}|\cdot q^{m}.$$

{\bf{(4)}}\,\,Let $M$ be a finite left $R$-module. Then, for any $t\in[1,s]$, it holds that
$$|\{A\leqslant_{R}M\mid\rank_R(A)=1,|A|=q^{t}\}|=\frac{|\{y\in M\mid \pi^{t}y=0\}|-|\{y\in M\mid \pi^{t-1}y=0\}|}{q^{t}-q^{t-1}}.$$
\end{lemma}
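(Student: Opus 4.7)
The plan is to establish parts~(1)--(4) sequentially, each supporting the next. For~(1), I invoke the structure theorem for finitely generated modules over the local principal ideal ring~$R$ to decompose $M$ as a direct sum of cyclic pieces $R/R\pi^{\lambda_i}$ with $\lambda_i\in[0,s]$; the alternative form $R\pi^{s-\lambda_i}$ comes from left multiplication by $\pi^{s-\lambda_i}$, which induces an isomorphism $R/R\pi^{\lambda_i}\cong R\pi^{s-\lambda_i}$. For the counting formula, in a single summand $R/R\pi^{\lambda}$ the $\pi^t$-annihilator is the submodule $R\pi^{\max(\lambda-t,0)}/R\pi^{\lambda}$, of size $q^{\min(\lambda,t)}$; multiplying over summands yields $q^{\sum_i\min(\lambda_i,t)}$.

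For~(2), by~(1) the isomorphism class of $M$ is determined by the multiset $\{\lambda_i\}$. The sequence $t\mapsto\log_q|\{y\in M:\pi^ty=0\}|=\sum_i\min(\lambda_i,t)$ has first differences $n_t=|\{i:\lambda_i\geqslant t\}|$ for $t\in[1,s]$, and a second round of differences recovers the multiset $\{\lambda_i\}$ from $(n_t)$. Under the hypothesis, $M$ and $P$ yield the same $(n_t)$, hence the same multiset, hence are isomorphic.

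For~(3), consider the map $\phi\colon\{z\in B:\pi^tz=0\}\to\{y\in A:\pi^{t-1}y=0\}$ defined by $\phi(z)=\pi z$. It is well-defined since $\pi^{t-1}(\pi z)=\pi^tz=0$ and $\pi z\in A$. For surjectivity, identifying $M$ with $R^m$, the condition $\pi^{t-1}y_i=0$ forces $y_i\in R\pi^{s-t+1}\subseteq R\pi$ (using $t\leqslant s$), so $y\in\pi M$; any preimage $z$ then satisfies $\pi z=y\in A$ (hence $z\in B$) and $\pi^tz=\pi^{t-1}(\pi z)=\pi^{t-1}y=0$ automatically. The kernel of $\phi$ equals the $\pi$-annihilator of $M\cong R^m$, namely $(R\pi^{s-1})^m$, of size~$q^m$. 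Combining surjectivity with the kernel size yields the claimed product formula. I expect surjectivity to be the subtle step, as it quietly uses both the freeness of~$M$ and the range $t\in[1,s]$.

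For~(4), I count in two ways the elements $y\in M$ whose annihilator equals $R\pi^t$, i.e., $\pi^ty=0$ but $\pi^{t-1}y\neq0$; these are exactly the elements generating a rank-$1$ submodule of size $q^t$. On one hand, this count is $|\{y:\pi^ty=0\}|-|\{y:\pi^{t-1}y=0\}|$. On the other hand, each rank-$1$ submodule $A$ of size $q^t$ is isomorphic to $R/R\pi^t$ by~(1), and a coset $r+R\pi^t$ generates $R/R\pi^t$ iff $r$ is a unit, giving $(|R|-|R\pi|)/|R\pi^t|=q^t-q^{t-1}$ generators per such~$A$. Equating the two counts yields the desired formula.
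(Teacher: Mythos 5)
Your proofs of (1) and (2) follow the paper's argument essentially verbatim (structure theorem plus the computation of the $\pi^t$-annihilator in each cyclic summand, then first differences of the exponents $\sum_i\min\{\lambda_i,t\}$ to recover the multiset of the $\lambda_i$), and your double count in (4) is the same idea as the paper's, differing only in that you count generators of the abstract cyclic module $R/R\pi^{t}$ (unit cosets, $(q^{s}-q^{s-1})/q^{s-t}=q^{t}-q^{t-1}$ of them), while the paper counts inside $M$ using the fact that two generators of a cyclic module differ by a unit ([20, Proposition 5.1]); both are correct. The genuine divergence is in (3): the paper invokes [14, Proposition 3.2] to choose an isomorphism $\varphi:M\to R^{m}$ with $\varphi[A]=\prod_i R\pi^{s-\lambda_i}$, observes $\varphi[B]=\prod_i R\pi^{s-\mu_i}$ with $\mu_i=\min\{\lambda_i+1,s\}$, and then applies part (1) to both sides (this is the step that forces the footnote about Smith normal forms over non-commutative chain rings), whereas you work directly with the additive map $\phi(z)=\pi z$ from $\{z\in B:\pi^{t}z=0\}$ onto $\{y\in A:\pi^{t-1}y=0\}$ and count fibers. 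Your argument is correct and more self-contained: well-definedness is immediate, surjectivity uses exactly the two hypotheses you flag (freeness of $M$, so that $\pi^{t-1}y=0$ forces componentwise $y_i\in R\pi^{s-t+1}\subseteq R\pi=\pi R$ and hence $y\in\pi M$, and the range $t\leqslant s$), and the kernel inside the domain is all of $\{z\in M:\pi z=0\}\cong(R\pi^{s-1})^{m}$ of size $q^{m}$ because $\pi z=0$ automatically puts $z$ in the domain. The only point worth making explicit is the use of the two-sidedness $R\pi^{i}=\pi^{i}R$ (stated in the paper's preliminaries), both to pass from $y_i\in R\pi^{s-t+1}$ to $y_i\in\pi R$ and to know that $B$ and the annihilator sets are genuinely submodules; with that noted, your route trades the external citation for an elementary homomorphism count, at no loss of generality.
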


\section{Reflexive partitions induced by rank support}

\setlength{\parindent}{2em}
Throughout this section, let $R$ and $S$ be rings, $M$ be a finite left $R$-module, $N$ be a finite right $S$-module, and let $\Omega$ be a non-empty finite set with $|\Omega|=n$. For any $\alpha\in M^{\Omega}$, the \textit{rank support} of $\alpha$, denoted by $\sigma(\alpha)$, is defined as the left $R$-submodule of $M$ generated by $\{\alpha_i\mid i\in\Omega\}$, i.e.,
\begin{equation}\sigma(\alpha)=\sum_{i\in\Omega} R\alpha_i;\end{equation}
moreover, the \textit{rank weight} of $\alpha$ is defined as the quantity
\begin{equation}\rank_R(\sigma(\alpha)).\end{equation}
We are interested in three partitions of $M^{\Omega}$ induced by rank support and rank weight. More precisely, \textit{the partition of $M^{\Omega}$ induced by $\sigma$} is the partition $\Lambda_1$ of $M^{\Omega}$ defined as
\begin{equation}\alpha\sim_{\Lambda_1}\gamma\Longleftrightarrow\sigma(\alpha)=\sigma(\gamma);\end{equation}
\textit{the partition of $M^{\Omega}$ induced by isomorphic relation of $\sigma$} is the partition $\Lambda_2$ of $M^{\Omega}$ defined as
\begin{equation}\alpha\sim_{\Lambda_2}\gamma\Longleftrightarrow\text{$\sigma(\alpha)\cong\sigma(\gamma)$ as left $R$-modules};\end{equation}
\textit{the partition of $M^{\Omega}$ induced by rank weight of $\sigma$} is the partition $\Lambda_3$ of $M^{\Omega}$ defined as
\begin{equation}\alpha\sim_{\Lambda_3}\gamma\Longleftrightarrow\rank_R(\sigma(\alpha))=\rank_R(\sigma(\gamma)).\end{equation}
It can be readily checked that $\Lambda_1$ is finer than $\Lambda_2$, and $\Lambda_2$ is finer than $\Lambda_3$.

\setlength{\parindent}{2em}
\begin{example}
Let $\mathbb{F}$ be a finite field, $m,n\in\mathbb{Z}^{+}$, and set $R=\mathbb{F}$, $M=\mathbb{F}^{[m]}$, $\Omega=[1,n]$. We identify $\Mat_{m,n}(\mathbb{F})$ with $M^{n}$ via the column blocking of matrices. Then, for any $\alpha\in\Mat_{m,n}(\mathbb{F})$, $\sigma(\alpha)$ is equal to the $\mathbb{F}$-subspace of $M$ generated by all the columns of $\alpha$, and the rank weight of $\alpha$ is exactly the rank of the matrix $\alpha$. Moreover, $\Lambda_1$ recovers the column space partition of $\Mat_{m,n}(\mathbb{F})$, and both $\Lambda_2$ and $\Lambda_3$ recover the rank partition of $\Mat_{m,n}(\mathbb{F})$ (see [9, Definitions 2.1 and 2.2]).
\end{example}

We note that rank support and rank weight can be defined in a parallel way for the right $S$-module $N$. More precisely, for any $\beta\in N^{\Omega}$, the \textit{rank support} of $\beta$, denoted by $\tau(\beta)$, is defined as the right $S$-submodule of $N$ generated by $\{\beta_i\mid i\in\Omega\}$, i.e.,
\begin{equation}\tau(\beta)=\sum_{i\in\Omega}\beta_iS,\end{equation}
and the \textit{rank weight} of $\beta$ is defined as the quantity
\begin{equation}\rank_S(\tau(\beta)).\end{equation}
Moreover, the partitions of $N^{\Omega}$ induced by $\tau$, isomorphic relation of $\tau$ and rank weight of $\tau$ are respectively the partitions $\Psi_1$, $\Psi_2$ and $\Psi_3$ of $N^{\Omega}$ defined as
\begin{equation}\beta\sim_{\Psi_1}\theta\Longleftrightarrow\tau(\beta)=\tau(\theta),\end{equation}
\begin{equation}\beta\sim_{\Psi_2}\theta\Longleftrightarrow\text{$\tau(\beta)\cong\tau(\theta)$ as right $S$-modules},\end{equation}
\begin{equation}\beta\sim_{\Psi_3}\theta\Longleftrightarrow\rank_S(\tau(\beta))=\rank_S(\tau(\theta)).\end{equation}

In this section, we focus on partitions induced by rank support and isomorphic relation of rank support. More precisely, we will show that with some additional assumptions, $(\Lambda_1,\Psi_1)$ and $(\Lambda_2,\Psi_2)$ are mutually dual.

Throughout the rest of this section, let $U$ be an $(R,S)$-bimodule, $\varpi:M\times N\longrightarrow U$ be a non-degenerate $(R,S)$-bilinear map, and define $\langle~,~\rangle:M^{\Omega}\times N^{\Omega}\longrightarrow U$ as
\begin{equation}\langle\alpha,\beta\rangle=\sum_{i\in\Omega}\varpi(\alpha_i,\beta_i).\end{equation}
Moreover, let $\chi\in\Hom((U,+),\mathbb{C}^{\ast})$ such that the only left $R$-submodule or right $S$-submodule of $U$ contained in $\ker(\chi)$ is $\{0\}$, and define $f:M^{\Omega}\times N^{\Omega}\longrightarrow \mathbb{C}^{\ast}$ as
\begin{equation}f(\alpha,\beta)=\chi(\langle\alpha,\beta\rangle).\end{equation}
For notational simplicity, for any $A\subseteq M$ and $B\subseteq N$, define
\begin{equation}A^{\ddagger}=\{v\in N\mid\forall~u\in A:\varpi(u,v)=0\},\end{equation}
\begin{equation}{^{\ddagger}B}=\{u\in M\mid \forall~v\in B:\varpi(u,v)=0\}.\end{equation}

We begin by computing the Krawtchouk matrices.

\setlength{\parindent}{0em}
\begin{proposition}
{\bf{(1)}}\,\,Let $\Gamma=\{B\mid\text{$B\leqslant_{S}N$}\}$, and let $\mu$ be the M\"{o}bius function of $(\Gamma,\subseteq)$. Consider $\alpha\in M^{\Omega}$ with $A\triangleq\sigma(\alpha)$. Then, for any $V\leqslant_{S}N$, it holds that
$$\sum_{(\beta\in N^{\Omega},\tau(\beta)=V)}f(\alpha,\beta)=\sum_{B\leqslant_{S}(V\cap{A^{\ddagger}})}|B|^{n}\mu(B,V).$$
{\bf{(2)}}\,\,Let $\Delta=\{A\mid\text{$A\leqslant_{R}M$}\}$, and let $\lambda$ be the M\"{o}bius function of $(\Delta,\subseteq)$. Consider $\theta\in N^{\Omega}$ with $B\triangleq\tau(\theta)$. Then, for any $L\leqslant_{R}M$, it holds that
$$\sum_{(\gamma\in M^{\Omega},\sigma(\gamma)=L)}f(\gamma,\theta)=\sum_{K\leqslant_{R}(L\cap{^{\ddagger}B})}|K|^{n}\lambda(K,L).$$
\end{proposition}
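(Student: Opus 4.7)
I would establish part (1) by the standard Möbius-inversion recipe on the finite lattice $(\Gamma,\subseteq)$ of right $S$-submodules of $N$; part (2) will then follow by a completely symmetric argument. The key auxiliary quantity is the cumulative character sum
\[
g(B) \;:=\; \sum_{\beta\in N^{\Omega},\ \tau(\beta)\subseteq B} f(\alpha,\beta), \qquad B\leqslant_{S} N,
\]
from which the target sum over $\tau(\beta)=V$ is recovered through $h(V)=\sum_{B\leqslant_{S} V}\mu(B,V)g(B)$.

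First I would evaluate $g(B)$ explicitly. The condition $\tau(\beta)\subseteq B$ is equivalent to $\beta\in B^{\Omega}$, and since $f(\alpha,\beta)=\chi\bigl(\sum_{i}\varpi(\alpha_i,\beta_i)\bigr)$, we obtain the coordinatewise factorisation
\[
g(B) \;=\; \prod_{i\in\Omega}\sum_{b\in B}\chi\bigl(\varpi(\alpha_i,b)\bigr).
\]
Each inner factor is a character sum over the finite abelian group $(B,+)$, so it equals $|B|$ if the character $b\mapsto\chi(\varpi(\alpha_i,b))$ is trivial on $B$ and vanishes otherwise. The hypothesis on $\chi$ enters precisely here: the image $\varpi(\alpha_i,B)$ is a right $S$-submodule of $U$, so it is contained in $\ker(\chi)$ if and only if it equals $\{0\}$, which is exactly the statement $\alpha_i\in{^{\ddagger}B}$. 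Since ${^{\ddagger}B}$ is a left $R$-submodule of $M$, the requirement that every $\alpha_i$ lie in ${^{\ddagger}B}$ is equivalent to $A=\sigma(\alpha)\subseteq{^{\ddagger}B}$, which in turn is equivalent to $B\subseteq A^{\ddagger}$. Hence $g(B)=|B|^{n}$ when $B\subseteq A^{\ddagger}$ and $g(B)=0$ otherwise.

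Feeding this dichotomy into the Möbius-inversion identity keeps only the terms with $B\leqslant_{S} V$ and $B\leqslant_{S} A^{\ddagger}$, i.e.\ $B\leqslant_{S} V\cap A^{\ddagger}$, yielding exactly the formula in the proposition. Part (2) is obtained by interchanging the roles of $M$ and $N$, of left and right modules, and of ${^{\ddagger}(\cdot)}$ and $(\cdot)^{\ddagger}$; the inner character sum $\sum_{a\in K}\chi(\varpi(a,\theta_i))$ is then evaluated via the left-sided half of the hypothesis on $\chi$ applied to the left $R$-submodule $\varpi(K,\theta_i)\leqslant_{R} U$.

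I do not foresee a genuine obstacle. The only point deserving care is checking that the two-sided non-degeneracy assumption on $\chi$ is exactly what reduces each inner sum to the dichotomy $\{0,|B|\}$ via the submodule $\varpi(\alpha_i,B)$ (and dually via $\varpi(K,\theta_i)$ in part (2)); this is the conceptual heart of the argument, while everything else is bookkeeping.
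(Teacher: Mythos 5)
Your proposal is correct and follows essentially the same route as the paper: define the cumulative sum over $\{\beta:\tau(\beta)\subseteq B\}$, show it equals $|B|^{n}$ when $B\subseteq A^{\ddagger}$ and $0$ otherwise, and then apply M\"obius inversion on the submodule lattice. The only cosmetic difference is that you evaluate the cumulative sum coordinatewise via the character sums $\sum_{b\in B}\chi(\varpi(\alpha_i,b))$ and the submodule $\varpi(\alpha_i,B)\leqslant_S U$, whereas the paper invokes orthogonality on $U^{\Omega}$ together with Lemma 2.2 applied to the product pairing; both reduce to the same dichotomy and the same hypothesis on $\chi$.
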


\begin{proof}
We only prove (1) and the proof of (2) is similar. Define $\varphi,\psi:\Gamma\longrightarrow\mathbb{C}$ as
$$\varphi(V)=\sum_{(\beta\in N^{\Omega},\tau(\beta)=V)}f(\alpha,\beta),$$
$$\psi(V)=\sum_{(\beta\in N^{\Omega},\tau(\beta)\subseteq V)}f(\alpha,\beta).$$
It follows from the definitions of $\varphi,\psi$ that for any $V\leqslant_{S}N$, it holds that
$$\psi(V)=\sum_{(U\in\Gamma,U\subseteq V)}\varphi(U).$$
From the M\"{o}bius inversion formula of the partially ordered set $(\Gamma,\subseteq)$ (see [19, Proposition 3.7.1]), we deduce that for any $V\leqslant_{S}N$, it holds that
\begin{equation}\varphi(V)=\sum_{(U\in\Gamma,U\subseteq V)}\psi(U)\mu(U,V).\end{equation}
Now we compute $\psi(U)$ for an arbitrary $U\leqslant_{S}N$. Noticing that for any $\beta\in N^{\Omega}$, it holds that $\tau(\beta)\subseteq U\Longleftrightarrow\beta\in U^{\Omega}$, an application of the orthogonal relation implies that
$$\psi(U)=\sum_{\beta\in U^{\Omega}}f(\alpha,\beta)=\begin{cases}
|U^{\Omega}|,&\forall~\beta\in U^{\Omega}:f(\alpha,\beta)=1;\\
0,&\exists~\beta\in U^{\Omega}~s.t.~f(\alpha,\beta)\neq1.
\end{cases}
$$
Noticing that $A=\sigma(\alpha)$, from Lemma 2.2 and some straightforward verification, we deduce that
$$(\forall~\beta\in U^{\Omega}:f(\alpha,\beta)=1)\Longleftrightarrow(\forall~\beta\in U^{\Omega}:\langle\alpha,\beta\rangle=0)\Longleftrightarrow\alpha\in({^{\ddagger}U})^{\Omega}\Longleftrightarrow A\subseteq{^{\ddagger}U}\Longleftrightarrow U\subseteq A^{\ddagger},$$
which further implies that
\begin{equation}\psi(U)=\begin{cases}
|U|^{n},&U\subseteq A^{\ddagger};\\
0,&U\nsubseteq A^{\ddagger}.
\end{cases}
\end{equation}
By (3.15) and (3.16), for any $V\leqslant_{S}N$, we have
$$\varphi(V)=\sum_{(U\in\Gamma,U\subseteq V,U\subseteq A^{\ddagger})}|U|^{n}\mu(U,V)=\sum_{U\leqslant_{S}(V\cap A^{\ddagger})}|U|^{n}\mu(U,V),$$
as desired.
\end{proof}

\setlength{\parindent}{2em}
Proposition 3.1 and (2) of Lemma 2.1 immediately implies that the partitions $\Lambda_1$ and $\Psi_1$ are mutually dual and hence reflexive. More precisely, we have the following theorem.

\setlength{\parindent}{0em}
\begin{theorem}
$(\Lambda_1,\Psi_1)$ is mutually dual with respect to $f$. Consequently, we have $|\Lambda_1|=|\Psi_1|$, and both $\Lambda_1$ and $\Psi_1$ are reflexive.
\end{theorem}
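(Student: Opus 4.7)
The plan is to deduce Theorem 3.1 directly from Proposition 3.1 together with part (2) of Lemma 2.1. The members of $\Psi_1$ are exactly the fibers $\{\beta\in N^{\Omega}\mid\tau(\beta)=V\}$ as $V$ ranges over the right $S$-submodules of $N$ of the form $\tau(\theta)$ for some $\theta\in N^{\Omega}$, and similarly for $\Lambda_1$. So by the definition (2.4) of the right dual partition $\textbf{\textit{r}}(\Lambda_1)$ and the definition (2.3) of the left dual partition $\textbf{\textit{l}}(\Psi_1)$, the crucial point is that the summation $\sum_{\beta}f(\alpha,\beta)$ over each member of $\Psi_1$ depends only on the $\Lambda_1$-class of $\alpha$, and symmetrically.

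Concretely, first I would check that $\Lambda_1$ is finer than $\textbf{\textit{l}}(\Psi_1)$. Let $\alpha,\gamma\in M^{\Omega}$ with $\alpha\sim_{\Lambda_1}\gamma$, i.e., $\sigma(\alpha)=\sigma(\gamma)=:A$. For any member of $\Psi_1$, which has the form $\{\beta\in N^{\Omega}\mid\tau(\beta)=V\}$ for some $V\leqslant_{S}N$, part (1) of Proposition 3.1 yields
$$\sum_{(\beta\in N^{\Omega},\tau(\beta)=V)}f(\alpha,\beta)=\sum_{B\leqslant_{S}(V\cap A^{\ddagger})}|B|^{n}\mu(B,V),$$
and the right-hand side depends only on $A$ and $V$. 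Applying the same formula to $\gamma$ gives the identical value, so (2.3) is satisfied and $\alpha\sim_{\textbf{\textit{l}}(\Psi_1)}\gamma$. Symmetrically, using part (2) of Proposition 3.1 in place of part (1), the inner sum $\sum_{\gamma\in\sigma^{-1}(L)}f(\gamma,\theta)$ depends only on $\tau(\theta)$ and $L$, which yields that $\Psi_1$ is finer than $\textbf{\textit{r}}(\Lambda_1)$.

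Having verified that $\Lambda_1$ is finer than $\textbf{\textit{l}}(\Psi_1)$ and $\Psi_1$ is finer than $\textbf{\textit{r}}(\Lambda_1)$, by Definition 2.1 the pair $(\Lambda_1,\Psi_1)$ is mutually dual with respect to $f$. The remaining conclusions --- $|\Lambda_1|=|\Psi_1|$, $\Lambda_1=\textbf{\textit{l}}(\Psi_1)$, $\Psi_1=\textbf{\textit{r}}(\Lambda_1)$, and reflexivity of both $\Lambda_1$ and $\Psi_1$ --- follow immediately from part (2) of Lemma 2.1.

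I do not expect a real obstacle here: the heavy lifting has already been done in Proposition 3.1, whose Möbius-inversion computation shows that the exponential sum over a $\tau$-fiber is a function of $(A,V)$ alone (and symmetrically a function of $(L,B)$ alone). The only point requiring a touch of care is to ensure that in Proposition 3.1 the submodule $V$ is allowed to be arbitrary in $N$ (so that it covers every possible value of $\tau(\beta)$), and that the verification of the finer-than condition really reduces to comparing the sums (2.3) and (2.4) member by member; both are immediate from the statements as given.
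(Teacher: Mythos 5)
Your proposal is correct and follows essentially the same route as the paper, which likewise deduces Theorem 3.1 directly from Proposition 3.1 (the character sum over each $\tau$-fiber, resp.\ $\sigma$-fiber, depends only on $\sigma(\alpha)$ and $V$, resp.\ $\tau(\theta)$ and $L$) combined with part (2) of Lemma 2.1. Your added care about which submodules $V$ actually occur as rank supports is fine and does not change the argument.
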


\setlength{\parindent}{2em}
\begin{example}
Let $\mathbb{F}$ be a finite field with $|\mathbb{F}|=q$, and let $m,n\in\mathbb{Z}^{+}$. Suppose that $R=S=U=\mathbb{F}$, $M=N=\mathbb{F}^{[m]}$, $\Omega=[1,n]$, and $\omega:\mathbb{F}^{[m]}\times \mathbb{F}^{[m]}\longrightarrow \mathbb{F}$ is the standard inner product on $\mathbb{F}^{[m]}$. Identifying $\Mat_{m,n}(\mathbb{F})$ with $M^{n}$ as in Example 3.1, $\langle~,~\rangle:\Mat_{m,n}(\mathbb{F})\times\Mat_{m,n}(\mathbb{F})\longrightarrow\mathbb{F}$ coincides with the trace inner product. By [9, Equation (3.1)], for any $U,V\leqslant_{\mathbb{F}}\mathbb{F}^{[m]}$ with $U\subseteq V$, we have
$$\mu(U,V)=(-1)^{b-a}q^{\binom{b-a}{2}},$$
where $a=\dim_{\mathbb{F}}(U)$, $b=\dim_{\mathbb{F}}(V)$. Hence for any $\alpha\in\Mat_{m,n}(\mathbb{F})$ with $A\triangleq\sigma(\alpha)$ and $V\leqslant_{\mathbb{F}}\mathbb{F}^{[m]}$ with $v\triangleq\dim_{\mathbb{F}}(V)$, an application of Proposition 3.1 implies that
$$\sum_{(\beta\in\Mat_{m,n}(\mathbb{F}),\sigma(\beta)=V)}f(\alpha,\beta)=\sum_{t=0}^{v}(-1)^{v-t}q^{tn+\binom{v-t}{2}}\bin_{q}(\dim_{\mathbb{F}}(V\cap A^{\ddagger}),t),$$
where for any $(b,a)\in\mathbb{N}^{2}$, $\bin_{q}(b,a)$ denotes the number of all the $a$-dimensional $\mathbb{F}$-subspaces of $\mathbb{F}^{b}$. This recovers [9, Theorem 3.3], which further leads to the MacWilliams identities for column support distributions of matrix codes (see [9, Corollary 3.4]). Moreover, Theorem 3.1 implies that $(\Lambda_1,\Lambda_1)$ is mutually dual with respect to $f$, which recovers [9, Corollary 2.6].
\end{example}

\setlength{\parindent}{2em}
Next, we turn to partitions induced by isomorphic relation of rank support, and we begin with the following proposition.

\setlength{\parindent}{0em}
\begin{proposition}
Let $\alpha,\gamma\in M^{\Omega}$ with $A\triangleq\sigma(\alpha)$, $E\triangleq\sigma(\gamma)$. Suppose that $\psi\in\Aut_{S}(N)$ and $E^{\ddagger}=\psi[A^{\ddagger}]$. Then, for any $V\leqslant_{S}N$, it holds that
$$\sum_{(\theta\in N^{\Omega},\tau(\theta)=\psi[V])}f(\gamma,\theta)=\sum_{(\beta\in N^{\Omega},\tau(\beta)=V)}f(\alpha,\beta).$$
\end{proposition}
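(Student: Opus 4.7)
The plan is to deduce the identity as a direct consequence of Proposition 3.1(1), combined with the observation that $\psi$ induces a poset isomorphism on the submodule lattice of $N$ that respects cardinality and hence the M\"{o}bius function.

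First, I would apply Proposition 3.1(1) to $\alpha$ with submodule $V$, and then apply it to $\gamma$ with submodule $\psi[V]$ (which is an $S$-submodule of $N$ since $\psi \in \Aut_{S}(N)$). This rewrites the two sides as
\begin{equation*}
\sum_{(\beta \in N^{\Omega},\, \tau(\beta) = V)} f(\alpha, \beta) = \sum_{B \leqslant_{S} (V \cap A^{\ddagger})} |B|^{n} \mu(B, V),
\end{equation*}
\begin{equation*}
\sum_{(\theta \in N^{\Omega},\, \tau(\theta) = \psi[V])} f(\gamma, \theta) = \sum_{C \leqslant_{S} (\psi[V] \cap E^{\ddagger})} |C|^{n} \mu(C, \psi[V]),
\end{equation*}
where in the second equation $\mu$ is still the M\"{o}bius function of the lattice of $S$-submodules of $N$.

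Second, I would use the hypothesis $E^{\ddagger} = \psi[A^{\ddagger}]$ together with the fact that $\psi$ is a bijection to compute
\begin{equation*}
\psi[V] \cap E^{\ddagger} = \psi[V] \cap \psi[A^{\ddagger}] = \psi[V \cap A^{\ddagger}].
\end{equation*}
Consequently, the map $B \mapsto \psi[B]$ is a bijection from the set of $S$-submodules of $V \cap A^{\ddagger}$ onto the set of $S$-submodules of $\psi[V] \cap E^{\ddagger}$, and its inverse sends any $C$ in the second set to $\psi^{-1}[C]$.

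Third, I would observe that this bijection preserves cardinality (so $|\psi[B]|^{n} = |B|^{n}$) and respects inclusion, which means the restriction of $\psi$ to the principal ideal $\{B' \leqslant_{S} N \mid B' \subseteq V\}$ of the submodule lattice is a poset isomorphism onto $\{C' \leqslant_{S} N \mid C' \subseteq \psi[V]\}$. Since the M\"{o}bius function depends only on the poset structure, we obtain $\mu(\psi[B], \psi[V]) = \mu(B, V)$ for every $B \leqslant_{S} V$. Substituting $C = \psi[B]$ in the sum for $\gamma$ then yields
\begin{equation*}
\sum_{C \leqslant_{S} (\psi[V] \cap E^{\ddagger})} |C|^{n} \mu(C, \psi[V]) = \sum_{B \leqslant_{S} (V \cap A^{\ddagger})} |B|^{n} \mu(B, V),
\end{equation*}
which matches the expression for the $\alpha$-side, completing the proof. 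There is no real obstacle here; the only point that requires a moment's care is the invariance of the M\"{o}bius function under the poset isomorphism induced by $\psi$, but this is a standard general fact about M\"{o}bius functions of locally finite posets.
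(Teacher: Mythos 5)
Your proof is correct and follows essentially the same route as the paper: apply Proposition 3.1(1) to both sides, use $E^{\ddagger}=\psi[A^{\ddagger}]$ to rewrite $\psi[V]\cap E^{\ddagger}$ as $\psi[V\cap A^{\ddagger}]$, and reindex the sum via $B\mapsto\psi[B]$, which preserves cardinalities and M\"{o}bius values. Your explicit remark that the M\"{o}bius function is invariant under the poset isomorphism induced by $\psi$ is exactly the point the paper uses implicitly.
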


\begin{proof}
Let $\Gamma=\{B\mid\text{$B\leqslant_{S}N$}\}$, and let $\mu$ be the M\"{o}bius function of $(\Gamma,\subseteq)$. Consider $V\leqslant_{S}N$. Noticing that $\psi[V]\cap{E^{\ddagger}}=\psi[V]\cap\psi[A^{\ddagger}]=\psi[V\cap{A^{\ddagger}}]$, by (1) of Proposition 3.1, we have
\begin{eqnarray*}
\begin{split}
\sum_{(\theta\in N^{\Omega},\tau(\theta)=\psi[V])}f(\gamma,\theta)&=\sum_{D\leqslant_{S}(\psi[V]\cap{E^{\ddagger}})}|D|^{n}\mu(D,\psi[V])\\
&=\sum_{D\leqslant_{S}\psi[V\cap{A^{\ddagger}}]}|D|^{n}\mu(D,\psi[V])\\
&=\sum_{B\leqslant_{S}(V\cap{A^{\ddagger}})}|\psi[B]|^{n}\mu(\psi[B],\psi[V])\\
&=\sum_{B\leqslant_{S}(V\cap{A^{\ddagger}})}|B|^{n}\mu(B,V)\\
&=\sum_{(\beta\in N^{\Omega},\tau(\beta)=V)}f(\alpha,\beta),
\end{split}
\end{eqnarray*}
as desired.
\end{proof}

\setlength{\parindent}{2em}
Now we are ready to show that the partitions $\Lambda_2$ and $\Psi_2$ are mutually dual and hence reflexive, as detailed in the following theorem.

\setlength{\parindent}{0em}
\begin{theorem}
Suppose that $M$ and $N$ are pseudo-injective; moreover, for any simple left $R$-module $X$, $\Hom_{R}(X,U)$ is a simple right $S$-module, and for any simple right $S$-modules $Y$, $\Hom_{S}(Y,U)$ is a simple left $R$-module. Then, $(\Lambda_2,\Psi_2)$ is mutually dual with respect to $f$. Consequently, we have $|\Lambda_2|=|\Psi_2|$, and both $\Lambda_2$ and $\Psi_2$ are reflexive.
\end{theorem}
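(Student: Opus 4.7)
My plan is to establish the mutual duality of $(\Lambda_2,\Psi_2)$ by directly verifying both inclusions, after which $|\Lambda_2|=|\Psi_2|$ and reflexivity follow from part (2) of Lemma 2.1. The key observation is that under the hypotheses of the theorem, Lemma 2.3 together with pseudo-injectivity upgrades an abstract isomorphism of rank supports into a concrete automorphism of $N$ (or of $M$), which is exactly the input required for Proposition 3.2 to transport character sums from $\alpha$ to $\gamma$.

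Concretely, to show that $\Lambda_2$ is finer than $\textbf{\textit{l}}(\Psi_2)$, take $\alpha,\gamma\in M^{\Omega}$ with $\sigma(\alpha)\cong\sigma(\gamma)$ as left $R$-modules, and set $A=\sigma(\alpha)$, $E=\sigma(\gamma)$. The hypotheses of the theorem match exactly those of Lemma 2.3 (1), which produces an isomorphism $A^{\ddagger}\cong E^{\ddagger}$ of right $S$-modules. Viewed as an injective $S$-linear map $A^{\ddagger}\hookrightarrow N$ with image $E^{\ddagger}$, it extends, since $N$ is finite and pseudo-injective, to some $\psi\in\Aut_S(N)$ satisfying $\psi[A^{\ddagger}]=E^{\ddagger}$. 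Proposition 3.2 then applies and yields, for every $V\leqslant_{S} N$,
$$\sum_{(\theta\in N^{\Omega},\tau(\theta)=\psi[V])}f(\gamma,\theta)=\sum_{(\beta\in N^{\Omega},\tau(\beta)=V)}f(\alpha,\beta).$$
For each isomorphism class of right $S$-submodule of $N$, represented by some $V_0\leqslant_{S} N$, partition the set $\{\beta\in N^\Omega:\tau(\beta)\cong V_0\}$ by the exact value of $\tau(\beta)$. Because $\psi\in\Aut_S(N)$, the map $V\mapsto\psi[V]$ is a bijection of $\{V\leqslant_{S} N:V\cong V_0\}$ to itself, so summing the displayed identity across this set and reindexing gives
$$\sum_{(\beta\in N^{\Omega},\tau(\beta)\cong V_0)}f(\alpha,\beta)=\sum_{(\theta\in N^{\Omega},\tau(\theta)\cong V_0)}f(\gamma,\theta),$$
which is exactly the defining condition for $\alpha\sim_{\textbf{\textit{l}}(\Psi_2)}\gamma$, since the members of $\Psi_2$ are precisely the sets $\{\beta:\tau(\beta)\cong V_0\}$.

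The reverse inclusion, that $\Psi_2$ is finer than $\textbf{\textit{r}}(\Lambda_2)$, is established by a completely symmetric argument: starting from $\beta,\theta\in N^{\Omega}$ with $\tau(\beta)\cong\tau(\theta)$, Lemma 2.3 (2) delivers an isomorphism ${}^{\ddagger}\tau(\beta)\cong{}^{\ddagger}\tau(\theta)$ of left $R$-modules, pseudo-injectivity of $M$ extends it to $\varphi\in\Aut_R(M)$, and the $\sigma$-sided analogue of Proposition 3.2 (already supplied by its part (2)) transfers the sums. Once both inclusions are in place, (2) of Lemma 2.1 immediately delivers $|\Lambda_2|=|\Psi_2|$ and the reflexivity of both partitions.

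The main obstacle is concentrated in a single step: promoting the abstract module isomorphism $A^{\ddagger}\cong E^{\ddagger}$ to an ambient automorphism $\psi\in\Aut_S(N)$ that can be fed into Proposition 3.2. All four hypotheses of the theorem are used precisely here — the two simple-module conditions supply Lemma 2.3, while pseudo-injectivity of $M$ and of $N$ lets us promote each resulting abstract isomorphism to an automorphism of the ambient module. Everything else is bookkeeping.
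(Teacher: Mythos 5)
Your proposal is correct and follows essentially the same route as the paper's proof: use Lemma 2.3 plus pseudo-injectivity of $N$ to produce $\psi\in\Aut_S(N)$ with $\psi[A^{\ddagger}]=E^{\ddagger}$, transfer the fixed-support character sums via Proposition 3.2, reindex over each isomorphism class by the bijection $V\mapsto\psi[V]$, argue symmetrically for the other inclusion, and finish with Lemma 2.1(2). The only (harmless) quibble is in the citations for the symmetric direction: what is needed there is the left–right mirror of Lemma 2.3(1) (using pseudo-injectivity of $N$) and the $\sigma$-sided analogue of Proposition 3.2 (derived from Proposition 3.1(2)), rather than literally ``Lemma 2.3(2)'' or a ``part (2)'' of Proposition 3.2; these mirrored statements hold by the same arguments, which is exactly the glossing the paper itself performs with its ``similarly'' step.
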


\begin{proof}
First, let $\alpha,\gamma\in M^{\Omega}$ with $\alpha\sim_{\Lambda_2}\gamma$, and let $A\triangleq\sigma(\alpha)$, $E\triangleq\sigma(\gamma)$. Since  $A\cong E$ as left $R$-modules, by Lemma 2.3, $A^{\ddagger}\cong E^{\ddagger}$ as right $S$-modules; moreover, since $N$ is pseudo-injective, we can choose $\psi\in\Aut_{S}(N)$ with $E^{\ddagger}=\psi[A^{\ddagger}]$. Now Consider an arbitrary $B\in\Psi_2$. Let $\xi\in B$, and let $W=\tau(\xi)$. From the definition of $\Psi_2$, we infer that $B=\{\beta\in N^{\Omega}\mid \text{$\tau(\beta)\cong W$ as right $S$-modules}\}$. Moreover, for any $V\leqslant_{S}N$, we have $\psi[V]\leqslant_{S}N$ and $\psi[V]\cong V$ as right $S$-modules. Hence from (1) of Proposition 3.2, we deduce that
\begin{eqnarray*}
\begin{split}
\sum_{\beta\in B}f(\alpha,\beta)&=\sum_{(V\leqslant_{S}N,V\cong W)}\sum_{(\beta\in N^{\Omega},\tau(\beta)=V)}f(\alpha,\beta)\\
&=\sum_{(V\leqslant_{S}N,V\cong W)}\sum_{(\theta\in N^{\Omega},\tau(\theta)=\psi[V])}f(\gamma,\theta)\\
&=\sum_{(V\leqslant_{S}N,V\cong W)}\sum_{(\theta\in N^{\Omega},\tau(\theta)=V)}f(\gamma,\theta)\\
&=\sum_{\theta\in B}f(\gamma,\theta).
\end{split}
\end{eqnarray*}
It then follows that $\alpha\sim_{\textbf{\textit{l}}(\Psi_2)}\gamma$, where $\textbf{\textit{l}}(\Psi_2)$ denotes the left dual partition of $\Psi_2$ with respect to $f$. Hence $\Lambda_2$ is finer than $\textbf{\textit{l}}(\Psi_2)$. Similarly, one can show that $\Psi_2$ is finer than $\textbf{\textit{r}}(\Lambda_2)$, the right dual partition of $\Lambda_2$ with respect to $f$. Therefore $(\Lambda_2,\Psi_2)$ is mutually dual with respect to $f$, and an application of (2) of Lemma 2.1 immediately establishes the desired result.
\end{proof}

\setlength{\parindent}{2em}
We end this section by noting that all the results established in this section can be applied to free modules over finite Frobenius rings. More precisely, suppose that $R$ is a finite Frobenius ring, $S=U=R$, $m\in\mathbb{Z}^{+}$, $M=N=R^{[m]}$, $\varpi$ is the standard inner product on $R^{[m]}$, and $\chi$ is a generating character of $R$. Since a Frobenius ring is quasi-Frobenius (see [20, Section 1]), all the assumptions in Theorem 3.2 are satisfied (see [4, Section 58] for more details), and hence Propositions 3.1, 3.2 and Theorems 3.1, 3.2 can be readily applied.

\section{Non-reflexive partitions induced by rank weight}
\setlength{\parindent}{2em}
Now we turn to partitions of modules over finite chain rings induced by rank weight. We will show that unlike partitions induced by rank support, these partitions are non-reflexive except for the familiar case for finite fields.

Throughout this section, let $R$ be a finite chain ring, and fix $\pi\in R$ such that $R\pi$ is the unique maximal left ideal of $R$. Let $q\triangleq|R/R\pi|$, and let $s\in\mathbb{Z}^{+}$ such that $|R|=q^{s}$. Fix $m,n\in\mathbb{Z}^{+}$ with $m\geqslant2$, $n\geqslant2$. Let $M$ be a free finite left $R$-module with $\rank_R(M)=m$, $N$ be a free finite right $R$-module with $\rank_R(N)=m$, and let $\Omega$ be a non-empty finite set with $|\Omega|=n$. For any $\alpha\in M^{\Omega}$, $\beta\in N^{\Omega}$, let $\sigma(\alpha)=\sum_{i\in\Omega}R\alpha_i$, $\tau(\beta)=\sum_{i\in\Omega}\beta_iR$. As in Section 3, let $\varpi:M\times N\longrightarrow R$ be a non-degenerated $(R,R)$-bilinear map, $\chi$ be a generating character of $R$, and define the non-degenerate pairing $f:M^{\Omega}\times N^{\Omega}\longrightarrow \mathbb{C}^{\ast}$ as
\begin{equation}f(\alpha,\beta)=\chi\left(\sum_{i\in\Omega}\varpi(\alpha_i,\beta_i)\right).\end{equation}
Following Section 3, we let $\Lambda_2$ and $\Lambda_3$ denote the partitions of $M^{\Omega}$ induced by isomorphic relation of $\sigma$ and rank weight of $\sigma$, respectively, and let $\Psi_2$ and $\Psi_3$ denote the partitions of $N^{\Omega}$ induced by isomorphic relation of $\tau$ and rank weight of $\tau$, respectively; moreover, for $i\in\{2,3\}$, let $\textbf{\textit{l}}(\Psi_i)$ denote the left dual partition of $\Psi_i$ with respect to $f$.

The following theorem is the main result of this section, in which we characterize the dual partition and non-reflexivity of $\Psi_3$.

\setlength{\parindent}{0em}
\begin{theorem}
We have $\Lambda_2=\textbf{\textit{l}}(\Psi_3)=\textbf{\textit{l}}(\Psi_2)$. Consequently, if $R$ is not a field, then $\Psi_3$ is non-reflexive, and it holds that $\Lambda_2\neq \Lambda_3$, $\Psi_2\neq\Psi_3$.
\end{theorem}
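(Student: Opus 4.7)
The plan is to pin down $\textbf{\textit{l}}(\Psi_3)$ by sandwiching it between two partitions that both equal $\Lambda_2$, and then to read off the non-reflexivity and partition inequalities from an explicit witness when $R$ is not a field.

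First I would verify that Theorem 3.2 applies in the present setting: $M$ and $N$ are finite free modules over the Frobenius chain ring $R$, hence injective and in particular pseudo-injective; and the unique simple left (respectively, right) $R$-module is $R/R\pi$, while $\Hom_R(R/R\pi, R) \cong R\pi^{s-1} \cong R/R\pi$ is simple on the other side. Theorem 3.2 then gives $\Lambda_2 = \textbf{\textit{l}}(\Psi_2)$ and $|\Lambda_2| = |\Psi_2|$. Since $\Psi_2$ is manifestly finer than $\Psi_3$, taking left duals preserves this relation, so $\textbf{\textit{l}}(\Psi_2)$ is finer than $\textbf{\textit{l}}(\Psi_3)$; hence $\Lambda_2$ is finer than $\textbf{\textit{l}}(\Psi_3)$.

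The core technical step is the reverse containment: $\textbf{\textit{l}}(\Psi_3)$ finer than $\Lambda_2$. Set $S_k(\alpha) := \sum_{\beta \in N^\Omega,\ \rank_R\tau(\beta) = k} f(\alpha,\beta)$; I need to show that the profile $(S_k(\alpha))_{k=0}^m$ determines the isomorphism class of $A := \sigma(\alpha)$. By Lemma 2.3 this is equivalent to determining the isomorphism class of $A^\ddagger$, and by Lemma 2.4(2) it further reduces to recovering each $|\{y \in A^\ddagger : y\pi^t = 0\}|$, $t \in [1,s]$, from the profile. My approach is to combine the Möbius description $S_k(\alpha) = \sum_{V \leqslant_R N,\ \rank_R V = k} \sum_{B \leqslant_R (V \cap A^\ddagger)} |B|^n \mu(B,V)$ of Proposition 3.1(1) with the orthogonality relation applied to the subgroup $\{v \in N : v\pi^t = 0\}$, whose annihilator under $\varpi$ is the submodule $\pi^t M$, yielding a closed-form expression for $|\{y \in A^\ddagger : y\pi^t = 0\}|$ in terms of $|A|$ and $|A \cap \pi^t M|$. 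The desired result follows by Möbius inversion on the submodule lattice of $N$, producing universal rational coefficients $c_{t,k}$ so that $|\{y \in A^\ddagger : y\pi^t = 0\}| = \sum_{k=0}^m c_{t,k}\, S_k(\alpha)$. The main obstacle is producing these coefficients rigorously; a key simplification is that by Proposition 3.2 both sides depend only on the isomorphism class of $A^\ddagger$, reducing the verification to a finite identity parameterized by iso types of submodules of $N$ of rank at most $n$, which via Lemma 2.4(1) and explicit formulas for the Möbius function on the submodule lattice of $R^m$ is in principle tractable.

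Finally, with $\Lambda_2 = \textbf{\textit{l}}(\Psi_3) = \textbf{\textit{l}}(\Psi_2)$ in hand, the remaining assertions follow immediately. Assume $R$ is not a field, so $s \geqslant 2$; fix $e_1$ to be the first vector of a free basis of $N$ and set $\beta = (e_1, 0, \dots, 0)$, $\theta = (e_1\pi, 0, \dots, 0) \in N^\Omega$. Both have rank weight one, yet $\tau(\beta) = e_1 R \cong R$ while $\tau(\theta) = e_1\pi R \cong R/R\pi^{s-1}$, which are non-isomorphic right $R$-modules for $s \geqslant 2$. Hence $\Psi_2 \neq \Psi_3$, so $|\Psi_2| > |\Psi_3|$; combined with $|\textbf{\textit{l}}(\Psi_3)| = |\Lambda_2| = |\Psi_2| > |\Psi_3|$, Lemma 2.1(1) forces $\Psi_3$ to be non-reflexive. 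The same construction transposed to $M^\Omega$ yields $\Lambda_2 \neq \Lambda_3$.
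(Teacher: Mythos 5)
Your outer scaffolding is sound and matches the paper: Theorem 3.2 gives $\Lambda_2=\textbf{\textit{l}}(\Psi_2)$, refinement passes to left duals so $\Lambda_2$ is finer than $\textbf{\textit{l}}(\Psi_3)$, the witness $(e_1,0,\dots,0)$ versus $(e_1\pi,0,\dots,0)$ correctly shows $\Psi_2\neq\Psi_3$ and $\Lambda_2\neq\Lambda_3$, and the counting argument $|\textbf{\textit{l}}(\Psi_3)|=|\Lambda_2|=|\Psi_2|>|\Psi_3|$ plus Lemma 2.1(1) correctly yields non-reflexivity. The reduction of the hard inclusion ($\textbf{\textit{l}}(\Psi_3)$ finer than $\Lambda_2$) to recovering the numbers $q^{c_{(t)}}=|\{y\in A^{\ddagger}\mid y\pi^{t}=0\}|$ from the profile $(S_k(\alpha))_k$ is also exactly the paper's strategy (via Lemma 2.4(2) and Lemma 2.3).

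The gap is in the mechanism you propose for that recovery. There are no universal rational coefficients $c_{t,k}$ with $|\{y\in A^{\ddagger}\mid y\pi^{t}=0\}|=\sum_{k}c_{t,k}S_k(\alpha)$: the quantities $c_{(t)}$ enter the sums $S_k$ \emph{in the exponent} of $q$, so the dependence is intrinsically non-linear, and moreover the profile carries almost no linear information beyond $S_1$ (one always has $S_0=1$ and $\sum_k S_k(\alpha)=0$ for $\alpha\neq0$). Concretely, for $s=m=n=2$ the types $A^{\ddagger}\cong R/R\pi$ and $A^{\ddagger}\cong R$ have the same $c_{(1)}=1$ but their values of $S_1$ differ by $-q^4/(q-1)\cdot(q-1)=-q^4\neq0$, so $q^{c_{(1)}}$ cannot be an affine function of the profile; no amount of M\"{o}bius bookkeeping will produce such coefficients. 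What the paper does instead is use \emph{only} the rank-one sum: Lemma 4.1 evaluates $\sum_{\tau(\beta)=V}f(\alpha,\beta)$ for cyclic $V$ in three cases according to the position of $V$ relative to $A^{\ddagger}$, Lemma 4.2 aggregates this into the explicit alternating sum of powers $q^{c_{(t)}+t(n-1)+1}$ and $q^{c_{(t)}+t(n-1)+m}$, and then an induction on $p$ shows that, once $c_{(0)},\dots,c_{(p-1)}$ are matched, the lowest-order surviving term has $q$-adic valuation exactly $c_{(p)}+p(n-1)+1$ (here the hypotheses $m\geqslant2$, $n\geqslant2$ are essential to push all other terms to strictly higher valuation), which forces $c_{(p)}=e_{(p)}$. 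You would need to supply this (or some other genuinely non-linear injectivity argument) to close the proof; as written, the central step is asserted rather than proved, and the asserted form is false.
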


\setlength{\parindent}{2em}
Theorem 4.1 will follow from the following more explicit characterization.

\setlength{\parindent}{0em}
\begin{theorem}
Let $\alpha,\gamma\in M^{\Omega}$. Then, the following three statements are equivalent to each other:

{\bf{(1)}}\,\,$\sigma(\alpha)\cong\sigma(\gamma)$ as left $R$-modules;

{\bf{(2)}}\,\,For any $b\in\mathbb{N}$, it holds that
\begin{eqnarray*}
\begin{split}
\sum_{(\beta\in N^{\Omega},\rank_R(\tau(\beta))=b)}f(\alpha,\beta)=\sum_{(\beta\in N^{\Omega},\rank_R(\tau(\beta))=b)}f(\gamma,\beta);
\end{split}
\end{eqnarray*}

{\bf{(3)}}\,\,It holds that
\begin{eqnarray*}
\begin{split}
\sum_{(\beta\in N^{\Omega},\rank_R(\tau(\beta))=1)}f(\alpha,\beta)=\sum_{(\beta\in N^{\Omega},\rank_R(\tau(\beta))=1)}f(\gamma,\beta).
\end{split}
\end{eqnarray*}
\end{theorem}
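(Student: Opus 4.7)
The plan is to establish the implications $(1)\Rightarrow(2)\Rightarrow(3)\Rightarrow(1)$.

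For $(1)\Rightarrow(2)$: $R$ is a finite chain ring, hence Frobenius, and $M,N$ are free $R$-modules, so the hypotheses of Theorem 3.2 are met (per the closing discussion of Section 3), yielding $\Lambda_2 = \textbf{\textit{l}}(\Psi_2)$. Since the isomorphism type of a right $R$-submodule of $N$ determines its rank, $\Psi_2$ is finer than $\Psi_3$, and the monotonicity of the operation $\textbf{\textit{l}}$ with respect to refinement forces $\textbf{\textit{l}}(\Psi_2)=\Lambda_2$ to be finer than $\textbf{\textit{l}}(\Psi_3)$. Thus $\sigma(\alpha)\cong\sigma(\gamma)$ gives $\alpha\sim_{\Lambda_2}\gamma$, which yields $\alpha\sim_{\textbf{\textit{l}}(\Psi_3)}\gamma$, i.e., statement (2). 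The step $(2)\Rightarrow(3)$ is immediate by specializing to $b=1$.

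The remaining work is $(3)\Rightarrow(1)$. Let $K:=\sigma(\alpha)^{\ddagger}$ and $K':=\sigma(\gamma)^{\ddagger}$; by Lemma 2.3 (whose hypotheses hold for free modules over the Frobenius ring $R$), $\sigma(\alpha)\cong\sigma(\gamma)$ is equivalent to $K\cong K'$ as right $R$-modules. Setting $S(\alpha):=\sum_{\beta:\,\rank_R\tau(\beta)=1} f(\alpha,\beta)$, I would invoke Proposition 3.1 together with the fact that the submodule lattice of a rank-$1$ $V\leq_R N$ is a chain, so its M\"obius function satisfies $\mu(V,V)=1$, $\mu(V\pi,V)=-1$, and vanishes elsewhere. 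Consequently
\[\sum_{\beta\in N^\Omega,\,\tau(\beta)=V} f(\alpha,\beta) = |V|^n[V\subseteq K] - |V\pi|^n[V\pi\subseteq K].\]
Summing over all rank-$1$ $V$, grouping by whether $V$ or $V\pi$ is contained in $K$, and applying Lemma 2.4 to count (in particular, each nonzero rank-$1$ $U\subseteq N\pi$ is of the form $V\pi$ for exactly $q^{m-1}$ rank-$1$ $V\leq_R N$), one obtains
\[S(\alpha) + c = \sum_{V \text{ rank-1 in } K}|V|^n - q^{m-1}\sum_{U\text{ rank-1 in }K\cap N\pi}|U|^n,\qquad c:=\frac{q^m-1}{q-1}.\]

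A key structural observation is that because $N$ is free of rank $m$, any element of $K$ annihilated by $\pi^t$ with $t\leq s-1$ automatically lies in $N\pi$; hence the rank-$1$ submodules of $K\cap N\pi$ of size $q^t$ coincide with those of $K$ for every $t\leq s-1$. This collapses the formula to
\[S(\alpha) + c = q^{sn}k_s + (1-q^{m-1})\sum_{t=1}^{s-1} q^{tn}k_t,\]
with $k_t:=|\{V\leq_R K:\rank_R V = 1,\ |V|=q^t\}|$. The principal technical point, and the main obstacle, is to show that the right-hand side, viewed as a function of the iso type of $K$, is injective. Writing $r_u:=|\{i:\mu_i\geq u\}|$ for the conjugate of the iso type $\mu$ of $K$ and $B_t:=t(n-1)+r_1+\cdots+r_t$, the expression rearranges into a signed linear combination of the monomials $q^{B_0},\ldots,q^{B_s}$ with coefficients depending only on $q,m,n$. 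The strictly increasing sequence $(B_t)$, whose consecutive differences $B_t-B_{t-1}=r_t+(n-1)$ form a non-increasing sequence in $[n-1,m+n-1]$, supplies the combinatorial rigidity needed to recover the tuple $(r_1,\ldots,r_s)$ uniquely from the value of the sum, with the case analysis splitting naturally on whether $k_s=0$ (no free summand in $K$) or $k_s\geq 1$. Once $K\cong K'$ is established, Lemma 2.3 delivers $\sigma(\alpha)\cong\sigma(\gamma)$, completing the cycle.
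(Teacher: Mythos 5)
Your reductions $(1)\Rightarrow(2)\Rightarrow(3)$ match the paper exactly (Theorem 3.2 plus monotonicity of $\textbf{\textit{l}}$ under refinement), and your formula
$S(\alpha)+c=q^{sn}k_s+(1-q^{m-1})\sum_{t=1}^{s-1}q^{tn}k_t$
is correct: substituting $k_t=(q^{c_{(t)}}-q^{c_{(t-1)}})/(q^t-q^{t-1})$ and multiplying by $q-1$ recovers the paper's Lemma 4.2 verbatim, so your route to the key identity (via the fiber count $q^{m-1}$ for $V\mapsto V\pi$ and the observation that torsion elements of a free module lie in $N\pi$, in place of the paper's use of $D=\{z\mid z\pi\in A^{\ddagger}\}$ and Lemma 2.4(3)) is a legitimate, mildly different derivation. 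The fiber-count claim itself deserves a word of justification (e.g.\ via transitivity of $\Aut_R(N)$ on cyclic submodules of a fixed size, or the quotient count from Lemma 2.4(4)), but that is minor.

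The genuine gap is the step you yourself call ``the principal technical point, and the main obstacle'': you assert that the signed combination of the monomials $q^{B_0},\dots,q^{B_s}$ determines $(r_1,\dots,r_s)$, but you never prove it, and it is not automatic --- the coefficients have mixed signs (e.g.\ the coefficient attached to $B_{s-1}$ is $q-q^n-q^m<0$ while those for $1\leqslant t\leqslant s-2$ are $(q^n-q)(q^{m-1}-1)>0$), so no naive ``distinct powers of $q$'' argument applies, and this injectivity is exactly where the hypotheses $m\geqslant2$, $n\geqslant2$ must enter. The paper closes this by an induction on $p$ combined with a $q$-adic valuation argument: assuming $c_{(i)}=e_{(i)}$ for $i<p$, the terms indexed by $t<p$ cancel from the two sides, and among the surviving exponents the unique minimal one is $c_{(p)}+p(n-1)+1$ (every other exponent is at least $c_{(p)}+p(n-1)+2$ precisely because $n\geqslant2$ and $m\geqslant2$), so the common value $w$ satisfies $q^{c_{(p)}+p(n-1)+1}\mid w$ but $q^{c_{(p)}+p(n-1)+2}\nmid w$, forcing $c_{(p)}=e_{(p)}$; then Lemma 2.4(2) and Lemma 2.3 finish. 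You would need to supply an argument of this kind (your $B_t$'s are exactly $c_{(t)}+t(n-1)$, so the same valuation induction works in your notation) before the cycle of implications is actually closed.
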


\setlength{\parindent}{2em}
Now we prove Theorem 4.2. Our proof is based on two lemmas. Throughout the rest of this section, as in (3.13), for any $A\subseteq M$, set $A^{\ddagger}=\{v\in N\mid\forall~u\in A:\varpi(u,v)=0\}$.

\setlength{\parindent}{2em}
\begin{lemma}
Let $\alpha\in M^{\Omega}$ with $A\triangleq\sigma(\alpha)$, $V\leqslant_{R}N$ with $\rank_R(V)=1$, and let $t\in\mathbb{Z}^{+}$ with $|V|=q^{t}$. Then, it holds that
\begin{eqnarray*}
\begin{split}
\sum_{(\beta\in N^{\Omega},\tau(\beta)=V)}f(\alpha,\beta)=\begin{cases}
0,&V\pi\nsubseteq A^{\ddagger};\\
-q^{(t-1)n},&V\pi\subseteq A^{\ddagger},V\nsubseteq A^{\ddagger};\\
q^{tn}-q^{(t-1)n},&V\subseteq A^{\ddagger}.
\end{cases}
\end{split}
\end{eqnarray*}
\end{lemma}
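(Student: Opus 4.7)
The plan is to exploit the fact that $V$, having rank $1$ over the chain ring $R$, is cyclic; consequently its submodules form a chain
$$V \supsetneq V\pi \supsetneq V\pi^{2} \supsetneq \cdots \supsetneq V\pi^{t}=\{0\},$$
so a right $R$-submodule of $N$ is a proper submodule of $V$ if and only if it is contained in $V\pi$. This collapses the otherwise messy Möbius inversion (as in Proposition 3.1) into a simple two-term inclusion-exclusion:
$$\sum_{(\beta\in N^{\Omega},\,\tau(\beta)=V)}f(\alpha,\beta)=\sum_{\beta\in V^{\Omega}}f(\alpha,\beta)-\sum_{\beta\in (V\pi)^{\Omega}}f(\alpha,\beta).$$
Thus the whole proof reduces to evaluating, for a general right submodule $W\leqslant_{R}N$, the sum $T(W)\triangleq\sum_{\beta\in W^{\Omega}}f(\alpha,\beta)$.

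Next I would compute $T(W)$ by factoring across coordinates. Using the definition of $f$ in (4.1), one gets
$$T(W)=\prod_{i\in\Omega}\sum_{v\in W}\chi(\varpi(\alpha_i,v)).$$
For each fixed $i$, the map $v\mapsto \varpi(\alpha_i,v)$ is a right $R$-linear map $W\to R$, so its image is a right ideal of $R$. Because $\chi$ is a generating character of $R$, the only right ideal of $R$ contained in $\ker\chi$ is $\{0\}$; hence the inner character sum equals $|W|$ when $\varpi(\alpha_i,W)=0$ (equivalently $\alpha_i\in{}^{\ddagger}W$) and $0$ otherwise. Multiplying over $i\in\Omega$, the product vanishes unless every $\alpha_i\in{}^{\ddagger}W$, i.e.\ unless $A=\sigma(\alpha)\subseteq{}^{\ddagger}W$, equivalently $W\subseteq A^{\ddagger}$. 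This gives the clean formula
$$T(W)=\begin{cases}|W|^{n},&W\subseteq A^{\ddagger},\\ 0,&W\not\subseteq A^{\ddagger}.\end{cases}$$

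Finally, I would apply this to $W=V$ (with $|V|=q^{t}$) and $W=V\pi$ (with $|V\pi|=q^{t-1}$, which follows from $V$ being cyclic of order $q^t$), and split into the three cases dictated by the relative positions of $V$, $V\pi$ and $A^{\ddagger}$: if $V\pi\not\subseteq A^{\ddagger}$ then $V\not\subseteq A^{\ddagger}$ and both terms vanish; if $V\pi\subseteq A^{\ddagger}$ but $V\not\subseteq A^{\ddagger}$ only the subtracted term survives, producing $-q^{(t-1)n}$; and if $V\subseteq A^{\ddagger}$ both terms contribute, yielding $q^{tn}-q^{(t-1)n}$. This matches the three cases in the statement exactly.

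I expect no serious obstacle: the one conceptual point that makes the whole calculation work cleanly is the chain structure of cyclic modules over a finite chain ring (which lets us replace "$\tau(\beta)\subsetneq V$" by "$\tau(\beta)\subseteq V\pi$"). The remaining ingredients — Fubini over the coordinate product, and the characterization of the character sum via the generating-character property — are standard and already implicit in the proof of Proposition 3.1.
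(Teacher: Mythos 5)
Your proof is correct, and it is essentially the paper's argument in a more self-contained form. The paper obtains the lemma by specializing Proposition 3.1 and then evaluating the M\"obius function of the chain $\{B \mid B\leqslant_R V\}$ (namely $\mu(V,V)=1$, $\mu(V\pi,V)=-1$, and $\mu(B,V)=0$ for $B\subsetneqq V\pi$), whereas you bypass the M\"obius machinery with the two-term inclusion--exclusion $\sum_{\tau(\beta)=V}f(\alpha,\beta)=T(V)-T(V\pi)$; since M\"obius inversion over a two-step chain is exactly this difference, the two routes coincide. Likewise, your evaluation $T(W)=|W|^{n}$ if $W\subseteq A^{\ddagger}$ and $T(W)=0$ otherwise is precisely equation (3.16) in the proof of Proposition 3.1, derived by coordinatewise factorization and the generating-character property instead of the paper's appeal to Lemma 2.2 and orthogonality over $W^{\Omega}$; all the supporting facts you use (submodules of a rank-one module over a chain ring form a chain, $|V\pi|=q^{t-1}$, and $W\subseteq A^{\ddagger}\Longleftrightarrow A\subseteq{}^{\ddagger}W$) are the same ones the paper relies on, so there is no gap. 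The only practical difference is that your version re-proves ingredients the paper simply cites, which makes it slightly longer but independent of Proposition 3.1.
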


\begin{proof}
Let $\Gamma=\{B\mid\text{$B\leqslant_{R}N$}\}$, and let $\mu$ be the M\"{o}bius function of $(\Gamma,\subseteq)$. By Proposition 3.1, we have
\begin{equation}w\triangleq\sum_{(\beta\in N^{\Omega},\tau(\beta)=V)}f(\alpha,\beta)=\sum_{B\leqslant_{R}(V\cap{A^{\ddagger}})}|B|^{n}\mu(B,V).\end{equation}
Let $T=\{B\mid B\leqslant_{R}V\}=\{B\mid B\in\Gamma,B\subseteq V\}$. Since $R$ is a finite chain ring, $\rank_R(V)=1$, $|V|=q^{t}$, we infer that $(T,\subseteq)$ is a chain, $|V\pi|=q^{t-1}$, and $V\pi$ is the unique maximal element of $(T-\{V\},\subseteq)$. Consequently, we have $\mu(V,V)=1$, $\mu(V\pi,V)=-1$, and for any $B\leqslant_{R}V$ with $B\subsetneqq V\pi$, it holds that $\mu(B,V)=0$.

Now we discuss in three cases. First, suppose that $V\pi\nsubseteq A^{\ddagger}$. Then, we have $V\cap{A^{\ddagger}}\subsetneqq V\pi$, which implies that $\mu(B,V)=0$ for all $B\leqslant_{R}(V\cap{A^{\ddagger}})$. It then follows from (4.2) that $w=0$, as desired. Second, suppose that $V\pi\subseteq A^{\ddagger}$ and $V\nsubseteq A^{\ddagger}$. Then, we have $V\cap{A^{\ddagger}}=V\pi$. It then follows from (4.2) that
$$w=\sum_{B\leqslant_{R}V\pi}|B|^{n}\mu(B,V)=|V\pi|^{n}\mu(V\pi,V)=-q^{(t-1)n},$$
as desired. Third, suppose that $V\subseteq A^{\ddagger}$. By (4.2), we have
$$w=\sum_{B\leqslant_{R}V}|B|^{n}\mu(B,V)=|V\pi|^{n}\mu(V\pi,V)+|V|^{n}\mu(V,V)=-q^{(t-1)n}+q^{tn},$$
as desired.
\end{proof}

\setlength{\parindent}{0em}
\begin{lemma}
Let $\alpha\in M^{\Omega}$ with $A\triangleq\sigma(\alpha)$; moreover, for any $i\in[0,s]$, let $c_{(i)}\in\mathbb{N}$ such that
$$|\{v\in A^{\ddagger}\mid v\pi^{i}=0\}|=q^{c_{(i)}}.$$
Then, it holds that
\begin{eqnarray*}
\begin{split}
(q-1)\sum_{(\beta\in N^{\Omega},\rank_{R}(\tau(\beta))=1)}f(\alpha,\beta)=&\left(\sum_{t=1}^{s}q^{c_{(t)}+t(n-1)+1}\right)+\left(\sum_{t=1}^{s-1}q^{c_{(t-1)}+t(n-1)+m}\right)\\
&-\left(\sum_{t=1}^{s}q^{c_{(t-1)}+t(n-1)+1}\right)-\left(\sum_{t=1}^{s-1}q^{c_{(t)}+t(n-1)+m}\right)-(q^{m}-1).
\end{split}
\end{eqnarray*}
\end{lemma}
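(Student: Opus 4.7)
The plan is to partition the sum $\sum_{(\beta\in N^{\Omega},\rank_{R}(\tau(\beta))=1)} f(\alpha,\beta)$ according to the rank-one right $R$-submodule $V = \tau(\beta)$ of $N$, and within that, according to the size $|V| = q^t$ for $t \in [1,s]$. Lemma 4.1 reduces each inner sum $\sum_{(\beta,\tau(\beta)=V)} f(\alpha,\beta)$ to a three-case expression depending on the position of $V$ and $V\pi$ relative to $A^\ddagger$. Thus the whole computation reduces to counting, for each $t \in [1,s]$, two quantities: the number $N_t^{\mathrm{sub}}$ of rank-one $V \leqslant_{R} N$ with $|V| = q^t$ and $V \subseteq A^\ddagger$, and the number $N_t^{\mathrm{weak}}$ of rank-one $V \leqslant_{R} N$ with $|V| = q^t$ and $V\pi \subseteq A^\ddagger$. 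A short case analysis collapses Lemma 4.1 so that the total contribution from size $q^t$ is exactly $N_t^{\mathrm{sub}}\, q^{tn} - N_t^{\mathrm{weak}}\, q^{(t-1)n}$.

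For $N_t^{\mathrm{sub}}$, these are precisely the rank-one right $R$-submodules of $A^\ddagger$ of size $q^t$, so the right-module analogue of Lemma 2.4(4) yields $N_t^{\mathrm{sub}} = (q^{c_{(t)}} - q^{c_{(t-1)}})/(q^t - q^{t-1})$. For $N_t^{\mathrm{weak}}$, observe that a rank-one $V = vR$ satisfies $V\pi = v\pi R$, and that $v\pi \in A^\ddagger$ is equivalent to $V\pi \subseteq A^\ddagger$ (using that $A^\ddagger \leqslant_{R} N$ and $R\pi = \pi R$). Hence if we set $B = \{v \in N : v\pi \in A^\ddagger\}$, which is a right $R$-submodule of $N$, then rank-one $V \leqslant_{R} N$ of size $q^t$ with $V\pi \subseteq A^\ddagger$ are in bijection with rank-one $V \leqslant_{R} B$ of size $q^t$. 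Applying the right-module analogue of Lemma 2.4(3) with $A := A^\ddagger$ yields $|\{v \in B : v\pi^t = 0\}| = q^{c_{(t-1)}+m}$ for every $t \in [1,s]$, and then the right-module version of Lemma 2.4(4) produces $N_t^{\mathrm{weak}}$ in closed form, with the $t=1$ case using the trivial count $|\{v \in B : v = 0\}| = 1$ in place of a Lemma 2.4(3) value.

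Plugging these formulas into $N_t^{\mathrm{sub}}\, q^{tn} - N_t^{\mathrm{weak}}\, q^{(t-1)n}$, multiplying by $q-1$ (which cancels the denominator $q^{t-1}(q-1)$ in each count), and summing over $t \in [1,s]$ produces four geometric-style sums together with a boundary contribution. An index shift $t \mapsto t-1$ on the two sums coming from $N_t^{\mathrm{weak}}$ for $t \geqslant 2$ brings the range to $[1,s-1]$ and aligns the exponents with the target identity, while the $t=1$ contribution to $N_1^{\mathrm{weak}}$ contributes exactly $-(q^m - 1)$. The work is essentially bookkeeping rather than conceptual; the only delicate point is the boundary at $t = 1$, where Lemma 2.4(3) does not apply at $t-1 = 0$ and must be replaced by the trivial count of zero-annihilated vectors, and one must check that this gives $N_1^{\mathrm{weak}} = (q^m-1)/(q-1)$ compatibly with the general formula.
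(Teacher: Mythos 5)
Your proposal is correct and follows essentially the same route as the paper's proof: it partitions the sum over rank-one modules $V=\tau(\beta)$ by size $q^{t}$, collapses the three cases of Lemma 4.1 into $N_t^{\mathrm{sub}}q^{tn}-N_t^{\mathrm{weak}}q^{(t-1)n}$, and counts both quantities via the right-module versions of Lemma 2.4(3)--(4) applied to $A^{\ddagger}$ and to $B=\{v\in N\mid v\pi\in A^{\ddagger}\}$ (the paper's $D$), with the same $t=1$ boundary treatment yielding $-(q^{m}-1)$. The remaining index-shift bookkeeping matches the paper's computation exactly.
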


\begin{proof}
Let $D=\{z\in N \mid z\pi \in A^{\ddagger}\}$. Noticing that for any $V\leqslant_{R}N$, we have $V\pi\subseteq A^{\ddagger}\Longleftrightarrow V\subseteq D$. We first consider an arbitrary $t\in\{1,\dots,s\}$. By Lemma 4.1 and (3), (4) of Lemma 2.4, we have
\begin{eqnarray*}
\begin{split}
w_t&\triangleq\sum_{(V\leqslant_{R}N,\rank_{R}(V)=1,|V|=q^{t})}\sum_{(\beta\in N^{\Omega},\tau(\beta)=V)}f(\alpha,\beta)\\
&=(q^{tn}-q^{(t-1)n})|\{V\leqslant_{R}A^{\ddagger}\mid\rank_{R}(V)=1,|V|=q^{t}\}|\\
&\,\,\,\,\,\,\,\,-q^{(t-1)n}|\{V\leqslant_{R}D\mid\rank_{R}(V)=1,|V|=q^{t},V\nsubseteq A^{\ddagger}\}|\\
&=q^{tn}|\{V\leqslant_{R}A^{\ddagger}\mid\rank_{R}(V)=1,|V|=q^{t}\}|-q^{(t-1)n}|\{V\leqslant_{R}D\mid\rank_{R}(V)=1,|V|=q^{t}\}|\\
&=q^{tn}\cdot\frac{q^{c_{(t)}}-q^{c_{(t-1)}}}{q^{t}-q^{t-1}}-q^{(t-1)n}\cdot\frac{|\{y\in D\mid y\pi^{t}=0\}|-|\{y\in D\mid y\pi^{t-1}=0\}|}{q^{t}-q^{t-1}}\\
&=\begin{cases}
q^{tn}\cdot\frac{q^{c_{(t)}}-q^{c_{(t-1)}}}{q^{t}-q^{t-1}}-q^{(t-1)n}\cdot\frac{q^{c_{(t-1)}+m}-q^{c_{(t-2)}+m}}{q^{t}-q^{t-1}},&t\geqslant2;\\
q^{tn}\cdot\frac{q^{c_{(t)}}-q^{c_{(t-1)}}}{q^{t}-q^{t-1}}-\frac{q^{m}-1}{q-1},&t=1,
\end{cases}
\end{split}
\end{eqnarray*}
which further implies that
\begin{eqnarray*}
\begin{split}
(q-1)w_t=\begin{cases}
q^{t(n-1)+c_{(t)}+1}-q^{t(n-1)+c_{(t-1)}+1}-q^{(t-1)(n-1)+c_{(t-1)}+m}+q^{(t-1)(n-1)+c_{(t-2)}+m},&t\geqslant2;\\
q^{t(n-1)+c_{(t)}+1}-q^{t(n-1)+c_{(t-1)}+1}-(q^{m}-1),&t=1.
\end{cases}
\end{split}
\end{eqnarray*}
The above discussion implies that
\begin{eqnarray*}
\begin{split}
h&\triangleq(q-1)\sum_{(\beta\in N^{\Omega},\rank_{R}(\tau(\beta))=1)}f(\alpha,\beta)\\
&=\sum_{t=1}^{s}(q-1)w_t\\
&=\left(\sum_{t=1}^{s}q^{t(n-1)+c_{(t)}+1}\right)-\left(\sum_{t=1}^{s}q^{t(n-1)+c_{(t-1)}+1}\right)-\left(\sum_{t=2}^{s}q^{(t-1)(n-1)+c_{(t-1)}+m}\right)\\
&\,\,\,\,\,\,\,\,+\left(\sum_{t=2}^{s}q^{(t-1)(n-1)+c_{(t-2)}+m}\right)-(q^{m}-1),
\end{split}
\end{eqnarray*}
which further establishes the desired result.
\end{proof}

\setlength{\parindent}{2em}
We are ready to establish Theorem 4.2.

\begin{proof}[Proof of Theorem 4.2]
${\bf{(1)}}\Longrightarrow{\bf{(2)}}$\,\,Noticing that all the assumptions of Theorem 3.2 are satisfied with $R=S=U$, we have $\alpha\sim_{\textbf{\textit{l}}(\Psi_2)}\gamma$. Since $\Psi_2$ is finer than $\Psi_3$, $\textbf{\textit{l}}(\Psi_2)$ is finer than $\textbf{\textit{l}}(\Psi_3)$, which further implies that $\alpha\sim_{\textbf{\textit{l}}(\Psi_3)}\gamma$, establishing (2), as desired.

${\bf{(2)}}\Longrightarrow{\bf{(3)}}$\,\,This is trivial.

${\bf{(3)}}\Longrightarrow{\bf{(1)}}$\,\,Let $A\triangleq\sigma(\alpha)$, $E\triangleq\sigma(\gamma)$; moreover, for any $i\in[0,s]$, let $c_{(i)},e_{(i)}\in\mathbb{N}$ such that
$$|\{v\in A^{\ddagger}\mid v\pi^{i}=0\}|=q^{c_{(i)}},~|\{v\in E^{\ddagger}\mid v\pi^{i}=0\}|=q^{e_{(i)}}.$$
Now we show by induction that $c_{(i)}=e_{(i)}$ for all $i\in[0,s]$. First of all, one can readily check that $c_{(0)}=e_{(0)}=0$. Now let $p\in[1,s]$ such that $c_{(i)}=e_{(i)}$ for all $i\in[0,p-1]$, and we will show that $c_{(p)}=e_{(p)}$. Indeed, from Lemma 4.2, we deduce that
\begin{equation}
\begin{aligned}
&\left(\sum_{t=p}^{s}q^{c_{(t)}+t(n-1)+1}\right)+\left(\sum_{t\in[p+1,s-1]}q^{c_{(t-1)}+t(n-1)+m}\right)-\left(\sum_{t=p+1}^{s}q^{c_{(t-1)}+t(n-1)+1}\right)-\left(\sum_{t=p}^{s-1}q^{c_{(t)}+t(n-1)+m}\right)\\
&=\left(\sum_{t=p}^{s}q^{e_{(t)}+t(n-1)+1}\right)+\left(\sum_{t\in[p+1,s-1]}q^{e_{(t-1)}+t(n-1)+m}\right)-\left(\sum_{t=p+1}^{s}q^{e_{(t-1)}+t(n-1)+1}\right)-\left(\sum_{t=p}^{s-1}q^{e_{(t)}+t(n-1)+m}\right).
\end{aligned}
\end{equation}
Let $w\in\mathbb{Z}$ denote both sides of (4.3). We claim that
\begin{equation}\text{$q^{c_{(p)}+p(n-1)+1}\mid w$ and $q^{c_{(p)}+p(n-1)+2}\nmid w$}.\end{equation}
Indeed, for any $t\in[p+1,s]$, from $c_{(t-1)}\geqslant c_{(p)}$ and $n\geqslant2$, we deduce that
$$c_{(t-1)}+t(n-1)+1\geqslant c_{(p)}+p(n-1)+2,$$
which, together with $c_{(t)}\geqslant c_{(t-1)}$ and $m\geqslant2$, further implies that
$$c_{(t)}+t(n-1)+1\geqslant c_{(p)}+p(n-1)+2,$$
$$c_{(t-1)}+t(n-1)+m\geqslant c_{(p)}+p(n-1)+2.$$
Moreover, for any $t\in[p,s-1]$, from $c_{(t)}\geqslant c_{(p)}$ and $m\geqslant2$, we deudce that
$$c_{(t)}+t(n-1)+m\geqslant c_{(p)}+p(n-1)+2.$$
The above discussion establishes (4.4), as desired. Moreover, a parallel discussion implies that
\begin{equation}\text{$q^{e_{(p)}+p(n-1)+1}\mid w$ and $q^{e_{(p)}+p(n-1)+2}\nmid w$}.\end{equation}
From (4.4) and (4.5), we deduce that $c_{(p)}+p(n-1)+1=e_{(p)}+p(n-1)+1$, which implies that $c_{(p)}=e_{(p)}$, as desired. Now an application of (2) of Lemma 2.4 implies that $A^{\ddagger}\cong E^{\ddagger}$ as right $R$-modules, which, together with Lemma 2.3, implies that $A\cong E$ as left $R$-modules, as desired.
\end{proof}

\setlength{\parindent}{2em}
Finally, we prove Theorem 4.1.

\begin{proof}[Proof of Theorem 4.1]
First, it follows from Theorems 4.2 and 3.2 that $\Lambda_2=\textbf{\textit{l}}(\Psi_3)$ and $\Lambda_2=\textbf{\textit{l}}(\Psi_2)$, as desired. Next, suppose that $R$ is not a field. Since $M$ is free, we can choose $z\in M$ such that $az\neq0$ for all $a\in R-\{0\}$. Let $\alpha\in M^{\Omega}$ such that $\alpha_i=z$ for all $i\in\Omega$, and let $\gamma\in M^{\Omega}$ such that $\gamma_i=\pi z$ for all $i\in\Omega$. Then, we have $\sigma(\alpha)=Rz$, $\sigma(\gamma)=R(\pi z)=(R\pi)z$. Since $R$ is not a field, $\pi\neq0$, which implies that $\rk_R(\sigma(\alpha))=\rk_R(\sigma(\gamma))=1$, and hence $\alpha\sim_{\Lambda_3}\gamma$; moreover, noticing that $\sigma(\alpha)\not\cong\sigma(\gamma)$ as left $R$-modules, we have $\alpha\not\sim_{\Lambda_2}\gamma$, which further implies that $\Lambda_2\neq\Lambda_3$, as desired. A similar discussion yields that $\Psi_2\neq\Psi_3$. Finally, noticing that $\Lambda_2$ is finer than $\Lambda_3$, we have $|\Lambda_2|>|\Lambda_3|=|\Psi_3|=\min\{m,n\}+1$. It then follows from Lemma 2.1 that $\Psi_3$ is non-reflexive, as desired.
\end{proof}

\section{Rank partitions of matrix spaces over finite chain rings}

\setlength{\parindent}{2em}
In this section, we study the dual partitions and reflexivity of rank partitions of matrix spaces over finite chain rings. To this end, let $R$ be a finite chain ring, and let $m,n\in\mathbb{Z}^{+}$ with $m\geqslant2$, $n\geqslant2$. Two matrices $\alpha,\gamma\in\Mat_{m,n}(R)$ are said to be \textit{equivalent} if there exists two invertible matrices $\eta_1\in\Mat_{m,m}(R)$ and $\eta_2\in\Mat_{n,n}(R)$ such that $\gamma=\eta_1\alpha\eta_2$. Similar to matrices over fields, any matrix $\alpha\in\Mat_{m,n}(R)$ is equivalent to a diagonal matrix $\beta\in\Mat_{m,n}(R)$ (i.e., $\beta_{i,j}=0$ for all $(i,j)\in[1,m]\times[1,n]$ with $i\neq j$), and such a $\beta$ is referred to as a Smith normal form of $\alpha$ (see [14, Section II.A]).

For any $\alpha\in\Mat_{m,n}(R)$, let $\col(\alpha)$ denote the right $R$-submodule of $R^{[m]}$ generated by all the columns of $\alpha$, and let $\row(\alpha)$ denote the left $R$-submodule of $R^{n}$ generated by all the rows of $\alpha$; moreover, following [14, Definition 3.3] and [2, Definition 2.5], the rank of $\alpha$, denoted by $\rk(\alpha)$, is defined as
\begin{equation}\rk(\alpha)=\rank_{R}(\col(\alpha))=\rank_{R}(\row(\alpha)),\end{equation}
where the second equality can be verified by using the Smith normal form (cf. [14, Corollary 3.5]). Similar to matrices over fields, $d:\Mat_{m,n}(R)\times \Mat_{m,n}(R)\longrightarrow\mathbb{N}$ defined as
\begin{equation}d(\alpha,\gamma)=\rk(\gamma-\alpha)\end{equation}
induces a metric on $\Mat_{m,n}(R)$, which is henceforth referred to as the \textit{rank metric} on $\Mat_{m,n}(R)$. Finally, the \textit{rank partition} $\Phi$ of $\Mat_{m,n}(R)$ is defined as
\begin{equation}\alpha\sim_{\Phi}\gamma\Longleftrightarrow\rk(\alpha)=\rk(\gamma).\end{equation}

When $R$ is a field, it is well-known that $\Phi$ is reflexive, and $(\Phi,\Phi)$ is mutually dual with respect to a suitable pairing (see [9, Corollary 2.6]). In terms of association schemes, this implies that the rank metric $d$ induces an association scheme, as has been shown in [6, Section 2]. MacWilliams identities for rank weight distributions of rank metric codes were first established by Delsarte in [6, Theorem 3.3], and have been re-established various times in recent years (see, e.g., \cite{8,11,17,18}).

Now we show that when $R$ is not a field, then opposite to the field case, $\Phi$ is non-reflexive, and the rank metric $d$ does not induce an association scheme. This will settle a question proposed in [2, Section 4], namely, whether rank metric on $\Mat_{m,n}(R)$ induces an association scheme. For further discussion, we let $\langle~,~\rangle:\Mat_{m,n}(R)^{2}\longrightarrow R$ denote the trace inner product, $\chi$ be a generating character of $R$, and define the non-degenerate pairing $f:\Mat_{m,n}(R)^{2}\longrightarrow\mathbb{C}^{*}$ as
\begin{equation}f(\alpha,\beta)=\chi(\langle\alpha,\beta\rangle).\end{equation}

We first use Theorems 4.1 and 4.2 to derive the dual partition and reflexivity of $\Phi$.

\setlength{\parindent}{0em}
\begin{theorem}
{\bf{(1)}}\,\,Let $\Delta$ be the partition of $\Mat_{m,n}(R)$ defined as $\alpha\sim_{\Delta}\gamma$ if and only if $\alpha^{T}$ and $\gamma^{T}$ are equivalent in $\Mat_{n,m}(R)$. Then, $\Delta$ is equal to the left dual partition of $\Phi$ with respect to $f$. Moreover, let $\alpha,\gamma\in \Mat_{m,n}(R)$. Then, $\alpha\sim_{\Delta}\gamma$ if and only if
$$\sum_{(\beta\in\Mat_{m,n}(R),\rk(\beta)=1)}f(\alpha,\beta)=\sum_{(\beta\in\Mat_{m,n}(R),\rk(\beta)=1)}f(\gamma,\beta).$$

{\bf{(2)}}\,\,$\Phi$ is reflexive if and only if $R$ is a field.
\end{theorem}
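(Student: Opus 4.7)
The plan is to reduce Theorem 5.1 to Theorems 4.1 and 4.2 via a suitable identification of $\Mat_{m,n}(R)$ with the $M^{\Omega}$ and $N^{\Omega}$ of Section 4. Take $M = R^{[m]}$ regarded as a free left $R$-module, $N = R^{[m]}$ regarded as a free right $R$-module, and $\Omega = [1,n]$. Identify $M^{\Omega}$ and $N^{\Omega}$ with $\Mat_{m,n}(R)$ via column blocking. The standard inner product $\varpi(u,v) = \sum_{j}u_{j}v_{j}$ is a non-degenerate $(R,R)$-bilinear pairing, and the induced $\langle\alpha,\beta\rangle$ coincides with the trace inner product $\Tr(\alpha^{T}\beta)$, so the pairing $f$ of Section 5 matches the one of Section 4. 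Under this identification $\tau(\beta) = \col(\beta)$, hence $\rank_{R}(\tau(\beta)) = \rk(\beta)$, and the partition $\Psi_{3}$ of Section 4 becomes exactly $\Phi$. By Theorem 4.1, the left dual partition of $\Phi$ is $\Lambda_{2}$, i.e., the partition in which $\alpha \sim \gamma$ iff $\sigma(\alpha) \cong \sigma(\gamma)$ as left $R$-modules.

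The central step is then to identify $\Lambda_{2}$ with $\Delta$. For $\alpha \in \Mat_{m,n}(R)$, applying the Smith normal form to $\alpha^{T} \in \Mat_{n,m}(R)$ yields invertible $\eta_{1}, \eta_{2}$ and a diagonal $D$ with invariant factors $\pi^{e_{1}},\dots,\pi^{e_{r}}$ (with $r = \rk(\alpha)$) such that $\alpha^{T} = \eta_{1}D\eta_{2}$. Transposing gives $\alpha = \eta_{2}^{T}D^{T}\eta_{1}^{T}$, and by a direct computation using the left $R$-linear isomorphism $R^{[m]} \cong R^{m}$ (via transposition of column vectors to row vectors), one checks that $\sigma(\alpha) \cong \bigoplus_{i=1}^{r} R/R\pi^{s-e_{i}}$ as a left $R$-module. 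By Lemma 2.4, this isomorphism class determines and is determined by the multiset $\{e_{1},\dots,e_{r}\}$, which in turn characterizes the equivalence class of $\alpha^{T}$; hence $\Lambda_{2} = \Delta$. The remaining assertion of (1), which characterizes $\sim_{\Delta}$ by the single rank-$1$ Krawtchouk sum, is then immediate from the equivalence $(1) \Longleftrightarrow (3)$ of Theorem 4.2. I expect the main obstacle to lie precisely in this Smith-normal-form step: carefully tracking left versus right module structures over a possibly non-commutative chain ring, and verifying that the transpose equivalence class of $\alpha^{T}$ is captured exactly by the left $R$-module isomorphism type of $\sigma(\alpha)$.

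For part (2), Lemma 2.1 reduces the reflexivity of $\Phi$ to the equality $|\Phi| = |\Delta|$. Clearly $|\Phi| = \min\{m,n\}+1$. If $R$ is a field, then $s = 1$ and equivalence of matrices over $R$ is determined purely by rank, so $|\Delta| = \min\{m,n\}+1$ and $\Phi$ is reflexive. If $R$ is not a field, then $s \geq 2$, and I would exhibit two rank-$1$ matrices in $\Mat_{m,n}(R)$ whose transposes are not equivalent: for instance, the matrix with a single entry $1$ at position $(1,1)$ versus the matrix with a single entry $\pi$ at position $(1,1)$. Their transposes have invariant factors $1$ and $\pi$, respectively, so are not equivalent; hence $|\Delta| > |\Phi|$, and $\Phi$ is not reflexive.
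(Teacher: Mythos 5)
Your proposal follows essentially the same route as the paper: identify $\Mat_{m,n}(R)$ with $(R^{[m]})^{n}$ by column blocking so that the trace pairing becomes the pairing of Section 4 and $\Phi$ becomes $\Psi_3$, identify the left dual partition with $\Lambda_2$ via Theorems 4.1 and 4.2 (with (1)$\Leftrightarrow$(3) giving the rank-one sum criterion), show $\Lambda_2=\Delta$ by relating the left $R$-module isomorphism type of $\sigma(\alpha)$ to the equivalence class of $\alpha^{T}$, and settle (2) by $\Delta=\Phi$ in the field case and strict refinement otherwise. One caveat in your central step: the manipulation ``$\alpha^{T}=\eta_1 D\eta_2$, hence $\alpha=\eta_2^{T}D^{T}\eta_1^{T}$'' uses $(AB)^{T}=B^{T}A^{T}$, which fails when $R$ is a non-commutative chain ring (a case the paper explicitly allows); the fix is to avoid transposing the factorization altogether and instead note, as the paper does, that $\sigma(\alpha)$ is carried by entrywise transposition onto $\row(\alpha^{T})$, whose left $R$-module isomorphism type is invariant under $\alpha^{T}\mapsto\eta_1\alpha^{T}\eta_2$ (left multiplication preserves the row module, right multiplication by an invertible matrix is a left-linear automorphism of $R^{m}$), and then read off $\row(\alpha^{T})\cong\bigoplus_{i}R\pi^{e_i}\cong\bigoplus_i R/R\pi^{s-e_i}$ from the Smith normal form of $\alpha^{T}$ itself; uniqueness of this decomposition (Lemma 2.4) then gives $\Lambda_2=\Delta$ exactly as you intend, so the gap is minor and repairable within your own framework.
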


\begin{proof}
For an arbitrary $\alpha\in\Mat_{m,n}(R)$, we let $\alpha_i\in R^{[m]}$ denote the $i$-th column of $\alpha$ for all $i\in[1,n]$, and we identify $\alpha$ with $(\alpha_1,\dots,\alpha_n)\in(R^{[m]})^{n}$; moreover, set $\sigma(\alpha)=\sum_{i=1}^{n}R\alpha_i$, and set $\tau(\alpha)=\sum_{i=1}^{n}\alpha_iR=\col(\alpha)$. Let $\varpi:R^{[m]}\times R^{[m]}\longrightarrow R$ be the standard inner product on $R^{[m]}$. One can check that $\langle\alpha,\beta\rangle=\sum_{i=1}^{n}\varpi(\alpha_i,\beta_i)$ for all $\alpha,\beta\in\Mat_{m,n}(R)$. Since $\rk(\beta)=\rank_{R}(\tau(\beta))$ for all $\beta\in\Mat_{m,n}(R)$, $\Phi$ can be regarded as the partition of $(R^{[m]})^{n}$ induced by rank weight of $\tau$. Moreover, we note that for any $\alpha\in\Mat_{m,n}(R)$, the left $R$-submodule of $R^{m}$ generated by all the rows of $\alpha^{T}$, denoted by $\row(\alpha^{T})$, is equal to $\{\eta^{T}\mid\eta\in\sigma(\alpha)\}$.

{\bf{(1)}}\,\,It is straightforward to check that $\alpha\sim_{\Delta}\gamma$ if and only if $\row(\alpha^{T})\cong\row(\gamma^{T})$ as left $R$-modules, if and only if $\{\eta^{T}\mid\eta\in\sigma(\alpha)\}\cong\{\eta^{T}\mid\eta\in\sigma(\gamma)\}$ as left $R$-modules, if and only if $\sigma(\alpha)\cong\sigma(\gamma)$ as left $R$-modules. Hence Theorems 4.1 and 4.2 conclude the proof.

{\bf{(2)}}\,\,If $R$ is a field, then we have $\Delta=\Phi$, which, together with Lemma 2.1, implies that $\Phi$ is reflexive. If $R$ is not a field, then it follows from Theorem 4.1 that $\Phi$ is non-reflexive, as desired.
\end{proof}

\setlength{\parindent}{2em}
Finally, we consider the question that whether $d$ induces an association scheme. Following [5, Section 2.1], we recall that for a partition $\Upsilon$ of $\Mat_{m,n}(R)^{2}$, $(\Mat_{m,n}(R),\Upsilon)$ is said to be an association scheme if the following three conditions hold:

$(i)$\,\,$\{(\alpha,\alpha)\mid \alpha\in\Mat_{m,n}(R)\}\in\Upsilon$;

$(ii)$\,\,For any $W\in\Upsilon$, $W^{-1}=\{(\alpha,\beta)\mid(\beta,\alpha)\in W\}\in\Upsilon$;

$(iii)$\,\,For any $U,V,W\in\Upsilon$ and any $(\alpha,\beta),(\gamma,\theta)\in W$, it holds that
$$|\{\eta\in\Mat_{m,n}(R)\mid(\alpha,\eta)\in U,(\eta,\beta)\in V\}|=|\{\eta\in\Mat_{m,n}(R)\mid(\gamma,\eta)\in U,(\eta,\theta)\in V\}|.$$

\setlength{\parindent}{2em}
We are ready to state and prove the following theorem.

\setlength{\parindent}{0em}
\begin{theorem}
Let $\Upsilon$ be the partition of $\Mat_{m,n}(R)^{2}$ defined as $(\alpha,\gamma)\sim_{\Upsilon}(\beta,\theta)$ if and only if $d(\alpha,\gamma)=d(\beta,\theta)$. Then, $(\Mat_{m,n}(R),\Upsilon)$ is an association scheme if and only if $R$ is a field.
\end{theorem}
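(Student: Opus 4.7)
The plan is to reduce the question to the reflexivity of the rank partition $\Phi$ via Remark 2.1, and then invoke Theorem 5.1(2). The key observation is that $\Upsilon$ is precisely the translation-invariant partition on the additive group $(\Mat_{m,n}(R),+)$ associated to $\Phi$ in the sense of Remark 2.1.

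More concretely, I would first unpack the definitions: since $d(\alpha,\gamma)=\rk(\gamma-\alpha)$, we have $(\alpha,\gamma)\sim_{\Upsilon}(\beta,\theta)$ if and only if $\rk(\gamma-\alpha)=\rk(\theta-\beta)$, i.e., if and only if $\gamma-\alpha\sim_{\Phi}\theta-\beta$. Taking $H=(\Mat_{m,n}(R),+)$ (written additively) and $\Gamma=\Phi$ in Remark 2.1, the partition $\Theta$ defined there by $(u,v)\sim_{\Theta}(x,y)\Longleftrightarrow v-u\sim_{\Gamma}y-x$ coincides exactly with $\Upsilon$. Remark 2.1 then tells us that $(\Mat_{m,n}(R),\Upsilon)$ is an association scheme if and only if $\Phi$ is reflexive.

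Combining this equivalence with Theorem 5.1(2), which asserts that $\Phi$ is reflexive if and only if $R$ is a field, immediately yields the theorem. For completeness I would also briefly note the sanity check that $\{0\}\in\Phi$ (the matrices of rank $0$ form a singleton class), so the hypothesis $\{1_H\}\in\Gamma$ appearing implicitly in the reflexivity/association scheme correspondence is satisfied.

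There is no real obstacle here: once the identification $\Upsilon=\Theta$ is made, everything follows from results already stated in the excerpt. The only subtle point is ensuring that Remark 2.1 is being applied in the additive setting, which is purely notational. The substantive work, namely the non-reflexivity of $\Phi$ when $R$ is not a field, has already been carried out in Sections 4 and 5 via Theorems 4.1 and 5.1.
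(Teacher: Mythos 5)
Your proposal is correct and follows essentially the same route as the paper: identify $\Upsilon$ with the translation-invariant partition $\Theta$ of Remark 2.1 associated to $\Phi$, conclude that $(\Mat_{m,n}(R),\Upsilon)$ is an association scheme if and only if $\Phi$ is reflexive, and then apply Theorem 5.1(2). The extra remark that the rank-$0$ class is the singleton $\{0\}$ is a harmless sanity check consistent with the paper's argument.
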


\begin{proof}
Noticing that $(\alpha,\gamma)\sim_{\Upsilon}(\beta,\theta)\Longleftrightarrow(\gamma-\alpha)\sim_{\Phi}(\theta-\beta)$ for all $(\alpha,\gamma),(\beta,\theta)\in\Mat_{m,n}(R)^{2}$, by [22, Theorem 1] (see Remark 2.1), $(\Mat_{m,n}(R),\Upsilon)$ is an association scheme if and only if $\Phi$ is reflexive. This, together with Theorem 5.1, immediately establishes the desired result.
\end{proof}

\section*{Appendix: Proof of Lemma 2.4}

\begin{proof}[Proof of Lemma 2.4]
{\bf{(1)}}\,\,First of all, (2.13) is a corollary of [12, Theorem 2.2] and [3, Section 2.1, Paragraph 3]. Next, consider $t\in[0,s]$. Noticing that $\{u\in R\mid\pi^{t}u=0\}=R\pi^{s-t}$, for any $i\in[1,m]$, we have $\{u\in R\pi^{s-\lambda_i}\mid\pi^{t}u=0\}=R\pi^{s-\lambda_i}\cap R\pi^{s-t}=R\pi^{s-\min\{\lambda_i,t\}}$, which implies that $|\{u\in R\pi^{s-\lambda_i}\mid\pi^{t}u=0\}|=q^{\min\{\lambda_i,t\}}$. It then follows from (2.13) that
$$\mbox{$|\{y\in M\mid \pi^{t}y=0\}|=|\{\alpha\in \prod_{i=1}^{m}R\pi^{s-\lambda_i}\mid \pi^{t}\alpha=0\}|=\prod_{i=1}^{m}q^{\min\{\lambda_i,t\}}=q^{\sum_{i=1}^{m}\min\{\lambda_i,t\}}$},$$
which further establishes the desired result.

{\bf{(2)}}\,\,By (1), there exists $m\in\mathbb{Z}^{+}$ and $(\lambda_1,\dots,\lambda_m),(\mu_1,\dots,\mu_m)\in[0,s]^{m}$ such that
\begin{equation}M\cong\prod_{i=1}^{m}R/R\pi^{\lambda_i},~P\cong\prod_{i=1}^{m}R/R\pi^{\mu_i}\end{equation}
as left $R$-modules. For any $t\in[0,s]$, set
$$a_t\triangleq\sum_{i=1}^{m}\min\{\lambda_i,t\},~b_t\triangleq\sum_{i=1}^{m}\min\{\mu_i,t\}.$$
From (1), we deduce that $a_t=b_t$ for all $t\in[1,s]$, and we also note that $a_0=b_0=0$. Via some straightforward computation, we deduce that for any $t\in[1,s]$, it holds that
$$|\{i\in [1,m]\mid\lambda_i\geqslant t\}|=a_t-a_{t-1}=b_t-b_{t-1}=|\{i\in [1,m]\mid\mu_i\geqslant t\}|.$$
This implies that $|\{i\in [1,m]\mid\lambda_i=t\}|=|\{i\in [1,m]\mid\mu_i=t\}|$ for all $t\in[1,s]$, which, together with (5.5), further implies that $M\cong P$ as left $R$-modules, as desired.

{\bf{(3)}}\,\,By [14, Proposition 3.2], there exists $(\lambda_1,\dots,\lambda_m)\in[0,s]^{m}$ and a left $R$-module isomorphism $\varphi:M\longrightarrow R^{m}$ such that $\varphi[A]=\prod_{i=1}^{m}R\pi^{s-\lambda_i}$ $^1$. \renewcommand{\thefootnote}{\fnsymbol{footnote}}\footnotetext{\hspace*{-6mm} \begin{tabular}{@{}r@{}p{16cm}@{}}
$^1$ &[14, Proposition 3.2] was established for commutative principal ideal rings; however, the result remains valid for \\&non-commutative finite chain rings as every matrix over a finite chain ring has a Smith normal form.
\end{tabular}} Define $(\mu_1,\dots,\mu_m)\in[0,s]^{m}$ as $\mu_i=\min\{\lambda_{i}+1,s\}$ for all $i\in[1,m]$. One can check that $\{u\in R\mid\pi u\in R\pi^{s-\lambda_i}\}=R\pi^{s-\mu_i}$ for all $i\in[1,m]$, which implies that $\varphi[B]=\prod_{i=1}^{m}R\pi^{s-\mu_i}$. Now consider an arbitrary $t\in[1,s]$. Let
\begin{equation}a=\sum_{i=1}^{m}\min\{\lambda_i,t-1\},~b=\sum_{i=1}^{m}\min\{\mu_i,t\}.\end{equation}
By (1), we have $|\{y\in A\mid \pi^{t-1}y=0\}|=q^{a}$, $|\{z\in B\mid \pi^{t}z=0\}|=q^{b}$. For any $i\in[1,m]$, from $s\geqslant t$, we deduce that $\min\{\mu_i,t\}=\min\{\lambda_{i}+1,t\}=\min\{\lambda_{i},t-1\}+1$. This, together with (5.6), implies that $b=a+m$, which further establishes the desired result.

{\bf{(4)}}\,\,This is a very special case of [12, Theorem 2.4], and we include a proof for completeness. Let $t\in[1,s]$, and let $Q=\{y\in M\mid \pi^{t}y=0,\pi^{t-1}y\neq0\}$. We infer that $\{Ry\mid y\in Q\}$ is exactly the set of all the left $R$-submodules of $M$ with rank $1$ and cardinality $q^{t}$. Next, we show that
\begin{equation}\forall~y\in Q:|\{z\in Q\mid Rz=Ry\}|=q^{t}-q^{t-1}.\end{equation}
To this end, consider $y\in Q$. Noticing that $R-J$ is exactly the set of all the multiplicative invertible elements of $R$, from [20, Proposition 5.1], we deduce that $\{z\in Q\mid Rz=Ry\}=\{uy\mid u\in R-J\}$. Now consider an arbitrary $u\in R-J$. Noticing that $\{a\in R\mid ay=0\}=R\pi^{t}$, we have $\{v\in R\mid vy=uy\}=u+R\pi^{t}$. Moreover, by $t\geqslant1$, we have $R\pi^{t}\subseteq J$ and hence $u+R\pi^{t}\subseteq R-J$, which implies that $|\{v\in R-J\mid vy=uy\}|=|R\pi^{t}|=q^{s-t}$. It then follows that
$$|\{uy\mid u\in R-J\}|=\frac{|R-J|}{q^{s-t}}=\frac{q^{s}-q^{s-1}}{q^{s-t}}=q^{t}-q^{t-1},$$
which further establishes (5.7), as desired. Finally, it follows from (5.7) that
$$|\{Ry\mid y\in Q\}|=\frac{|Q|}{q^{t}-q^{t-1}},$$
which immediately establishes the desired result.
\end{proof}

\end{document}